\title{Exploiting Multi-Antennas for Opportunistic Spectrum Sharing in Cognitive Radio Networks
\footnote{Manuscript received on May 18, 2007; revised on August 29,
2007. This paper was presented in part at IEEE International
Symposium on Personal, Indoor and Mobile Radio Communications
(PIMRC), Athens, Greece, September 2007.} \footnote {The authors are
with the Institute for Infocomm Research, 21 Heng Mui Keng Terrace,
119613 Singapore (e-mails:\{rzhang, ycliang\}@i2r.a-star.edu.sg).}}
\author{Rui Zhang, {\it Member, IEEE,} Ying-Chang Liang, {\it Senior Member,
IEEE}}
\begin{document}
\maketitle \maketitle \thispagestyle{empty}

\begin{abstract}
In cognitive radio (CR) networks, there are scenarios where the
secondary (lower priority) users intend to communicate with each
other by opportunistically utilizing the transmit spectrum
originally allocated to the existing primary (higher priority)
users. For such a scenario, a secondary user usually has to trade
off between two conflicting goals at the same time: one is to
maximize its own transmit throughput; and the other is to minimize
the amount of interference it produces at each primary receiver.

In this paper, we study this fundamental tradeoff from an
information-theoretic perspective by characterizing the secondary
user's channel capacity under both its own transmit-power constraint
as well as a set of interference-power constraints each imposed at
one of the primary receivers. In particular, this paper exploits
multi-antennas at the secondary transmitter to effectively balance
between spatial multiplexing for the secondary transmission and
interference avoidance at the primary receivers. Convex optimization
techniques are used to design algorithms for the optimal secondary
transmit spatial spectrum that achieves the capacity of the
secondary transmission. Suboptimal solutions for ease of
implementation are also presented and their performances are
compared with the optimal solution. Furthermore, algorithms
developed for the single-channel transmission are also extended to
the case of multi-channel transmission whereby the secondary user is
able to achieve opportunistic spectrum sharing via transmit
adaptations not only in space, but in time and frequency domains as
well. Simulation results show that even under stringent
interference-power constraints, substantial capacity gains are
achievable for the secondary transmission by employing
multi-antennas at the secondary transmitter. This is true even when
the number of primary receivers exceeds that of secondary transmit
antennas in a CR network, where an interesting ``interference
diversity'' effect can be exploited.
\end{abstract}

\vspace{0.5in}
\begin{keywords}
Cognitive radio (CR), opportunistic spectrum sharing, multiple-input
multiple-output (MIMO), channel capacity, dynamic resource
allocation, interference diversity, convex optimization.
\end{keywords}

\newpage

\setlength{\baselineskip}{1.3\baselineskip}
\newtheorem{claim}{Claim}
\newtheorem{guess}{Conjecture}
\newtheorem{definition}{Definition}
\newtheorem{fact}{Fact}
\newtheorem{assumption}{Assumption}
\newtheorem{theorem}{Theorem}
\newtheorem{lemma}{Lemma}
\newtheorem{ctheorem}{Corrected Theorem}
\newtheorem{corollary}{Corollary}
\newtheorem{proposition}{Proposition}
\newtheorem{example}{Example}
\newtheorem{remark}{Remark}[section]
\newtheorem{problem}{\underline{Problem}}
\newtheorem{algorithm}{\underline{Algorithm}}
\newcommand{\mv}[1]{\mbox{\boldmath{$ #1 $}}}

\section{Introduction}
Fixed spectrum allocation is the major spectrum allocation
methodology for traditional wireless communication services. In
particular, in order to avoid interference, different wireless
services are allocated with different licensed bands. With the
popularity of various wireless technologies, fixed spectrum
allocation strategy has resulted in scarcity in radio spectrum, due
to the fact that most of the available spectrum has been allocated.
Because of the spectrum scarcity, one immediate consequence is that,
while there are a lot of research activities dealing with technical
issues related to the fourth-generation (4G) cellular systems, it is
still unclear which frequency band is available for such systems. On
the other hand, recent measurements by FCC and others have shown
that more than 70\% of the allocated spectrum in United States is
not utilized \cite{FCC03}. Furthermore, the spectrum utilization
varies in space, time and frequency. This motivates the invention of
cognitive radio (CR) network \cite{Mitola99}-\cite{Haykin05}, which
supports \emph{opportunistic spectrum sharing} by allowing the
secondary (lower priority) users to share the radio spectrum
originally allocated to the primary (higher priority) users. By
doing so, the utilization efficiency of the radio spectrum can be
significantly enhanced.

In CR networks, the primary users have a higher priority than the
secondary users in utilizing spectrum resource, therefore, one
fundamental challenge by introducing opportunistic spectrum sharing
is to ensure the quality-of-service (QoS) of the primary users while
maximizing the achievable throughput of the secondary users.
Recently, there has been a great deal of research related to this
interesting problem. When the primary users are legacy systems that
do not actively participate in transmit power control, the QoS of
the primary users is maintained by introducing interference-power
constraint measured at the primary receiver. That is to say, the
interference power received at the primary receiver should be less
than a threshold. Under this setup, a seminal work in
\cite{Gastpar07} studied the channel capacity of a single secondary
transmission under a set of receiver-side power constraints instead
of the conventional transmitter-side power constraints. In
\cite{Ghasemi07}, the ergodic capacity of the secondary transmission
link in a fading environment is studied under instantaneous or
average interference-power constraint at a single primary receiver.
When multiple secondary users share a single-frequency channel
advertised by the primary user, the authors in
\cite{Huang06}-\cite{Xing07} utilize the game theory to maximize the
sum of the utility functions of the secondary users under the
interference-power constraint at some measured point, while in
\cite{Zheng05}-\cite{Wang05} transmit resource allocation for the
secondary users is studied by applying the graph-theoretic models.
Recent information-theoretic studies on CR networks can also be
found in, e.g., \cite{Syed07}-\cite{Syed07a}.

Most prior research on radio resource allocation for CR networks
focuses on time and/or frequency domains, while assuming single
antenna employed at both primary and secondary transceivers.
Wireless transmissions via multiple transmit antennas and multiple
receive antennas, or the so-called multiple-input multiple-output
(MIMO) transmissions, have received considerable attention during
the past decade. Multi-antennas can be utilized to achieve many
desirable functions for wireless transmissions, such as folded
capacity increase without bandwidth expansion (e.g.,
\cite{Telatar99}-\cite{Raleigh98}), dramatic enhancement of
transmission reliability via space-time coding (e.g.,
\cite{Tarokh98}-\cite{Tarokh99}), and effective co-channel
interference suppression for multiuser transmissions (e.g.,
\cite{Liu98}). However, the role of multi-antennas in a CR network
is yet completely understood. Generally speaking, multi-antennas can
be used to allocate transmit dimensions in space and hence provide
the secondary transmitter in a CR network more degrees of freedom in
space in addition to time and frequency so as to balance between
maximizing its own transmit rate and minimizing the interference
powers at the primary receivers. This motivates the research of this
paper to be done, with an aim to address fundamentally the MIMO
channel capacity of a secondary user under optimum spectrum sharing
in a CR network.

The main contributions of this paper are summarized as follows:
\begin{itemize}
\item This paper formulates the design of capacity-achieving transmit spatial spectrum
for a single secondary link in a CR network under both its own
transmit-power constraint and a set of interference-power
constraints at the primary receivers as a sequence of {\it convex}
optimization problems. Thanks to convexity of these formulated
problems, efficient numerical algorithms are proposed to obtain the
optimal secondary transmit spatial spectrum for any arbitrary number
of secondary transmit and receive antennas, as well as primary
receivers each having single or multiple antennas.

\item In the case where the secondary user's channel is
multiple-input single-output (MISO), i.e., there is only a single
antenna at the secondary receiver, this paper proves that {\it
beamforming} is the optimal strategy for the secondary transmission.
For the special case where there is only one single-antenna primary
receiver, we are able to derive the closed-form solution for the
optimal beamforming vector at the secondary transmitter. For the
more general MIMO case where multiple antennas are equipped at both
the secondary transmitter and receiver, this paper presents two
suboptimal algorithms to tradeoff between {\it spatial multiplexing}
for the secondary transmission and interference avoidance at the
primary receivers. One algorithm is based on the singular-value
decomposition (SVD) of the secondary user's MIMO channel directly
and is thus referred to as the {\it Direct-Channel SVD} (D-SVD); and
the other is based on the SVD of the secondary MIMO channel after
the projection into the null space of the channel from the secondary
transmitter to the primary receivers (thereby removing completely
the interference at all primary receivers) and is thus referred to
as the {\it Projected-Channel SVD} (P-SVD).

\item In the case of multiple
primary receivers with single or multiple antennas, a {\it hybrid}
D-SVD/P-SVD algorithm is proposed to remove partially, as opposed to
completely in the case of P-SVD, the interferences at the primary
receivers. An interesting and novel {\it interference diversity}
effect is discovered and exploited.

\item At last, this paper extends the developed algorithms for the single-channel transmission to
the case of multi-channel transmission whereby the secondary
transmitter is able to adapt transmit resources like transmit
spatial spectrum, power and rate in both space and time/freqeuncy
for opportunistic spectrum sharing. This paper shows that the
multi-channel resource allocation problem can be efficiently solved
using the {\it Lagrange dual-decomposition method} that decomposes
the original problem into a set of smaller-size subproblems each for
one of the sub-channels.
\end{itemize}

In \cite{Gastpar07}, Gastpar also considered the MIMO channel
capacity for the spectrum sharing scenario. Since only the
interference-power constraint is considered in \cite{Gastpar07}, it
requires that there exist multiple primary receivers/antennas such
that the channel matrix from the secondary transmitter to the
primary receivers/antennas is invertible, and by doing so, an
upper-bound of the secondary user's MIMO channel capacity is derived
by transforming the set of receiver-side power constraints into a
transmitter-side power constraint. In contrast, in this paper we
consider the MIMO channel capacity for the secondary user under both
the interference-power constraints at the primary receivers as well
as an explicit transmit-power constraint for the secondary user.
With addition of this explicit transmitter-side power constraint, we
are able to quantify the exact channel capacity for the secondary
user for any arbitrary number of primary receivers/antennas.

This paper is organized as follows. Section \ref{sec:system model}
provides the system model of a CR network under opportunistic
spectrum sharing, and presents the general problem formulation for
transmit optimization of the secondary user. Section \ref{sec:single
primary receiver} presents the solution for the simplest case where
there is only one single-antenna primary receiver, and both primary
and secondary users share the same single-channel for transmission.
Section \ref{sec:multiple primary receivers} and Section
\ref{sec:multi-channel} extend the solutions to incorporate multiple
primary receivers/antennas and the multi-channel transmission,
respectively. Section \ref{sec:simulation results} provides the
simulation results. Finally, Section \ref{sec:conclusion} concludes
the paper.

The following notations are used in this paper. $|\mv{S}|$ denotes
the determinant, and $\mathtt{Tr}(\mv{S})$ the trace of a square
matrix $\mv{S}$, and $\mv{S}\succeq 0$ means that $\mv{S}$ is a
positive semi-definite matrix. For any general matrix $\mv{M}$,
$\mv{M}^{\dag}$ denotes its conjugate transpose, and
$\mathtt{Rank}(\mv{M})$ denotes its rank. $\mv{I}$ denotes the
identity matrix. $\mathbb{E}[\cdot]$ denotes statistical
expectation. $\mathbb{C}^{x \times y}$ denotes the space of $x\times
y$ matrices with complex entries. The distribution of a
circularly-symmetric-complex-Gaussian (CSCG) vector with the mean
vector $\mv{x}$ and the covariance matrix $\mv{\Sigma}$ is denoted
by $\mathcal{CN}(\mv{x},\mv{\Sigma})$, and $\sim$ means
``distributed as''. The quantity $\min(x,y)$ and $\max(x,y)$ denote,
respectively, the minimum and the maximum between two real numbers,
$x$ and $y$, and $(x)^+\triangleq\max(x,0)$. The quantity
$||\mv{x}||$ denotes the Euclidean norm of a vector (also for a
scalar) $\mv{x}$, i.e.,
$||\mv{x}||^2=\mathtt{Tr}(\mv{x}\mv{x}^{\dag})$. $\mathtt{Re}(x)$
and $\mathtt{Im}(x)$ denote the real and imaginary part of a complex
number $x$, respectively.

\section{System Model and Problem Formulation} \label{sec:system model}

This paper considers a CR network with $K$ primary receivers and a
single pair of secondary transmitter and receiver as shown in Fig.
\ref{fig:CR system}. It is assumed that all the primary users and
the secondary user share the same bandwidth for transmission. This
paper considers the scenario where multiple antennas are equipped at
the secondary transmitter and possibly at the secondary receiver and
each of the primary receivers. It is assumed that the MIMO/MISO
channels from the secondary transmitter to the secondary and primary
receivers are perfectly known at the secondary transmitter. Under
such assumptions, the secondary transmitter is able to adapt its
transmit resources such as transmit rate, power, and spatial
spectrum based upon the channel knowledge so as to optimally balance
between maximizing its own transmit throughput and avoiding
interferences at the primary receivers. In practice, the channel
from the secondary transmitter to the primary receiver can be
obtained at the secondary transmitter by, e.g., periodically sensing
the transmitted signal from the primary receiver provided that
time-division-duplex (TDD) is employed by the primary transmission.
In a fading environment, there are cases where it is difficult for
the secondary transmitter to perfectly estimate instantaneous
channels. In such cases, the results obtained in this paper provide
capacity upper-bounds for the secondary transmission in a CR
network.

Consider first a single-channel transmission (e.g., narrow-band
transmission with deterministic channels) for both primary and
secondary users. The extension to the multi-channel transmission
(e.g., multi-tone transmission in frequency or consecutive
block-fading channels in time) is considered later in Section
\ref{sec:multi-channel} of this paper. In the single-channel case,
the secondary transmission can be represented by
\begin{equation}\label{eq:channel model}
\mv{y}(n) = \mv{H}\mv{x}(n) + \mv{z}(n).
\end{equation}
In the above, $\mv{H}\in\mathbb{C}^{M_{r,s} \times M_{t,s}}$ denotes
the secondary user's channel (assumed to be full rank) where
$M_{r,s}$ and $M_{t,s}$ are the number of antennas at the secondary
receiver and transmitter, respectively; $\mv{y}(n)$ and $\mv{x}(n)$
are the received and transmitted signal vector, respectively, and
$n$ is the symbol index; $\mv{z}(n)$ is the additive noise vector at
the secondary receiver, and it is assumed that
$\mv{z}(n)\sim\mathcal{CN}(0,\mv{I})$.\footnote{The noise at the
secondary receiver may also contain the interference from the
primary transmitters (not shown in Fig. \ref{fig:CR system}) and is
thus non-white in general. However, by applying a noise-whitening
filter at the secondary receiver and incorporating this filter
matrix into the channel matrix $\mv{H}$, the equivalent noise at the
secondary receiver can be assumed to be approximately white
Gaussian.} Let the transmit covariance matrix (or spatial spectrum)
of the secondary user be denoted by $\mv{S}$,
$\mv{S}=\mathbb{E}[\mv{x}(n)\mv{x}^{\dag}(n)]$, where the
expectation is taken over the code-book. Since this paper
characterizes the information-theoretic limit of the secondary
transmission, it is assumed that the ideal Gaussian code-book with
infinitely large number of codeword symbols is used, i.e.,
$\mv{x}(n)\sim\mathcal{CN}(0,\mv{S}), n=1,2,\ldots,\infty$. The
transmit covariance matrix $\mv{S}$ can be further represented by
its eigenvalue decomposition expressed as
\begin{equation}\label{eq:SVD}
\mv{S}=\mv{V}\mv{\Sigma}\mv{V}^{\dag},
\end{equation}
where $\mv{V}\in\mathbb{C}^{M_{t,s}\times d}$,
$\mv{V}^{\dag}\mv{V}=\mv{I}$, contains the eigenvectors of $\mv{S}$,
and is also termed in practice as the {\it precoding matrix} because
each column of $\mv{V}$ is the precoding vector for one transmitted
data stream; $d$, $d\leq M_{t,s}$, is usually referred to as the
degree of {\it spatial multiplexing} because it measures the number
of transmit dimensions (or equivalently, the number of data streams)
in the spatial domain; $\mv{\Sigma}$ is a $d \times d$ diagonal
matrix, and its diagonal elements, denoted by
$\sigma_1,\sigma_2,\ldots,\sigma_{d}$, are the positive eigenvalues
of $\mv{S}$, and also represent the assigned transmit powers for
their corresponding data streams. Notice that
$\mathtt{Rank}(\mv{S})=d$. If $d=1$, the corresponding transmission
strategy is usually termed as {\it beamforming} while in the case of
$d>1$, it is termed as {\it spatial multiplexing}. The transmit
power $P$ for each block is limited by the secondary user's own
transmit power constraint denoted by $P_t$, i.e., it holds that
$P=\mathtt{Tr}(\mv{S})=\sum_{i=1}^{d}\sigma_i\leq P_t$.

Assuming that there are $K$ primary receivers in the CR network,
each equipped with $M_k$ receive antennas, $k=1,\ldots,K$. For each
primary receiver, there may be a total interference-power constraint
over all receive antennas or a set of interference-power constraints
applied to each individual receive antenna. The former case can be
expressed as
\begin{equation}\label{eq:total power ct}
\sum_{j=1}^{M_k}\mv{g}_{k,j}\mv{S}\mv{g}_{k,j}^{\dag}\leq \Gamma_k,
\ \ k=1,\ldots,K,
\end{equation}
where $\mv{g}_{k,j}\in\mathbb{C}^{1\times M_{t,s}}$ represents the
channel from the secondary transmitter to the $j$-th receive antenna
of the $k$-th primary receiver, and $\Gamma_k$ is the total
interference-power constraint over all receive antennas for the
$k$-th primary receiver. Let $\mv{G}_k\in\mathbb{C}^{M_k\times
M_{t,s}}$ (assumed to be full-rank) be the equivalent channel matrix
from the secondary transmitter to the $k$-th primary receiver
obtained by stacking all $\mv{g}_{k,j}$, $j=1,\ldots,M_k$, into
$\mv{G}_k$. Using $\mv{G}_k$'s, (\ref{eq:total power ct}) can be
thus rewritten as
\begin{equation}\label{eq:total power ct 2}
\mathtt{Tr}\left(\mv{G}_k\mv{S}\mv{G}_k^{\dag}\right) \leq \Gamma_k,
\ \ k=1,\ldots,K.
\end{equation}
The latter case can be represented as
\begin{equation}\label{eq:indiv power ct}
\mv{g}_{k,j}\mv{S}\mv{g}_{k,j}^{\dag}\leq \gamma_k, \ \
j=1,\ldots,M_k, k=1,\ldots,K,
\end{equation}
where $\gamma_k$ is the interference-power constraint applied to
each antenna of the $k$-th primary receiver, and is assumed to be
identical for all of its receive antennas. Notice that if
$\Gamma_k=M_k\gamma_k$, the per-antenna power constraint in
(\ref{eq:indiv power ct}) is more stringent than the total-power
constraint in (\ref{eq:total power ct}). On the other hand,
(\ref{eq:indiv power ct}) can be considered as a special case of
(\ref{eq:total power ct}) because if each receive antenna is treated
as an independent primary receiver, (\ref{eq:indiv power ct}) is
equivalent to (\ref{eq:total power ct}) if a total number of
$M_{r,p}=\sum_{k=1}^KM_k$ single-antenna primary receivers are
assumed. Therefore, for the rest of this paper, the total
interference-power constraint in (\ref{eq:total power ct}) or
(\ref{eq:total power ct 2}) is considered as the generalized
interference-power constraint at each primary receiver.

The use of total interference-power constraint at each primary
receiver can be further justified by the following theorem:
\begin{theorem}\label{theorem:capacity loss}
The capacity loss of the $k$-th primary transmission,
$k=1,2,\ldots,K$, due to the interference from the secondary
transmitter with transmit covariance matrix $\mv{S}$ that satisfies
the set of total interference-power constraints in (\ref{eq:total
power ct 2}) is upper-bounded by
$\min(M_k,N_k)\log_2\left(1+\frac{\Gamma_k}{\phi_k}\right)$
bits/complex dimension, where $N_k$ and $M_k$ are the number of
transmit and receiver antennas for the $k$-th primary user,
respectively, and $\phi_k$ is the additive white Gaussian noise
power at its receiver.
\end{theorem}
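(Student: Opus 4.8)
The plan is to bound the rate loss of the $k$-th primary link by comparing its capacity \emph{without} the secondary interference to its capacity \emph{with} the secondary signal treated as additional colored Gaussian noise. Let $\mv{H}_{p,k}\in\mathbb{C}^{M_k\times N_k}$ denote the $k$-th primary channel and $\mv{Q}_k\succeq 0$ its input covariance (with whatever power constraint the primary system uses). The interference-free capacity is $C_k^{(0)}=\log_2\bigl|\mv{I}+\tfrac{1}{\phi_k}\mv{H}_{p,k}\mv{Q}_k\mv{H}_{p,k}^{\dag}\bigr|$, while in the presence of the secondary transmission the achievable rate (treating interference as noise, which is the relevant worst-case/legacy-receiver assumption stated in the paper) is $C_k^{(1)}=\log_2\bigl|\mv{I}+\mv{H}_{p,k}\mv{Q}_k\mv{H}_{p,k}^{\dag}(\phi_k\mv{I}+\mv{R}_k)^{-1}\bigr|$, where $\mv{R}_k=\mv{G}_k\mv{S}\mv{G}_k^{\dag}\succeq 0$ is the received secondary interference covariance at primary receiver $k$. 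The capacity loss is $\Delta_k=C_k^{(0)}-C_k^{(1)}$.

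First I would rewrite $\Delta_k$ as a single log-determinant by the identity $|\mv{A}||\mv{B}|^{-1}=|\mv{A}\mv{B}^{-1}|$ and the matrix-determinant lemma, obtaining $\Delta_k=\log_2\bigl|\mv{I}+\tfrac{1}{\phi_k}\mv{R}_k\bigr|-\log_2\bigl|\mv{I}+(\phi_k\mv{I}+\mv{R}_k+\mv{H}_{p,k}\mv{Q}_k\mv{H}_{p,k}^{\dag})^{-1}\mv{R}_k \cdot(\text{nonneg correction})\bigr|$; more cleanly, one shows directly that $\Delta_k\le\log_2\bigl|\mv{I}+\tfrac{1}{\phi_k}\mv{R}_k\bigr|$, because adding the extra positive-semidefinite term $\mv{H}_{p,k}\mv{Q}_k\mv{H}_{p,k}^{\dag}$ to the ``noise'' only helps the comparison in the right direction (formally: $f(\mv{X})=\log_2|\mv{X}+\mv{A}|-\log_2|\mv{X}|$ is monotonically decreasing in $\mv{X}\succ 0$ in the positive-definite order for fixed $\mv{A}\succeq 0$, applied with $\mv{X}=\phi_k\mv{I}$ versus $\mv{X}=\phi_k\mv{I}+\mv{R}_k$). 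This reduces the problem to bounding $\log_2\bigl|\mv{I}+\tfrac{1}{\phi_k}\mv{R}_k\bigr|$.

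Next I would bound this log-determinant. Since $\mv{R}_k=\mv{G}_k\mv{S}\mv{G}_k^{\dag}$ and $\mv{G}_k\in\mathbb{C}^{M_k\times M_{t,s}}$, the rank of $\mv{R}_k$ is at most $\min(M_k,\mathtt{Rank}(\mv{S}))\le\min(M_k,N_k)$ — wait, more carefully the natural bound on the number of nonzero eigenvalues is $\min(M_k, M_{t,s})$; the appearance of $N_k$ in the statement must come from the primary-channel side, so the cleaner route is to first apply monotonicity/rank arguments on the $\mv{H}_{p,k}\mv{Q}_k\mv{H}_{p,k}^{\dag}$ factor: $\Delta_k$ can also be upper-bounded by $\log_2\bigl|\mv{I}+\tfrac{1}{\phi_k}\mv{H}_{p,k}\mv{Q}_k\mv{H}_{p,k}^{\dag}\bigr|$-type quantities whose rank is $\le\min(M_k,N_k)$, and then combine with the $\mv{R}_k$ bound so that the effective number of active eigenvalues is $\le\min(M_k,N_k)$. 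Let $\lambda_1,\dots,\lambda_r$ be these eigenvalues, $r\le\min(M_k,N_k)$. Then $\log_2\prod_{i=1}^r(1+\lambda_i/\phi_k)=\sum_{i=1}^r\log_2(1+\lambda_i/\phi_k)$, and by concavity of $\log_2(1+x)$ (Jensen) this is $\le r\log_2\bigl(1+\tfrac{1}{r\phi_k}\sum_i\lambda_i\bigr)$. Finally $\sum_i\lambda_i=\mathtt{Tr}(\mv{R}_k)=\mathtt{Tr}(\mv{G}_k\mv{S}\mv{G}_k^{\dag})\le\Gamma_k$ by the constraint (\ref{eq:total power ct 2}), and since $x\mapsto r\log_2(1+x/(r\phi_k))$ is increasing, we get $\Delta_k\le r\log_2\bigl(1+\tfrac{\Gamma_k}{r\phi_k}\bigr)\le\min(M_k,N_k)\log_2\bigl(1+\tfrac{\Gamma_k}{\phi_k}\bigr)$, using $r\log_2(1+\Gamma_k/(r\phi_k))\le r\cdot\tfrac{1}{r}\log_2(1+\Gamma_k/\phi_k)\cdot$... — actually the last inequality follows from $r\log(1+a/r)\le\log(1+a)$ being false in general, so instead I bound $r\log_2(1+\Gamma_k/(r\phi_k))\le r\log_2(1+\Gamma_k/\phi_k)$ is too weak; the correct final step is simply $\sum_{i=1}^r\log_2(1+\lambda_i/\phi_k)\le r\log_2(1+(\sum_i\lambda_i)/\phi_k)\le\min(M_k,N_k)\log_2(1+\Gamma_k/\phi_k)$ using $\lambda_i\le\sum_j\lambda_j\le\Gamma_k$ termwise.

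The main obstacle I anticipate is getting the dimension constant to be exactly $\min(M_k,N_k)$ rather than the looser $\min(M_k,M_{t,s})$ that falls out naively from the rank of $\mv{G}_k\mv{S}\mv{G}_k^{\dag}$: one has to argue that even though the secondary interference covariance $\mv{R}_k$ may be full rank $M_k$, the capacity \emph{difference} $\Delta_k$ is governed by the smaller of the interference rank and the primary signal rank $\mathtt{Rank}(\mv{H}_{p,k}\mv{Q}_k\mv{H}_{p,k}^{\dag})\le\min(M_k,N_k)$, since a rank-deficient desired-signal covariance confines the loss to its own support. Establishing this cleanly — e.g. by projecting onto the range space of $\mv{H}_{p,k}\mv{Q}_k\mv{H}_{p,k}^{\dag}$ and noting the complementary subspace contributes zero to $\Delta_k$ — is the step that requires genuine care; everything else is the standard concavity-plus-trace-constraint argument.
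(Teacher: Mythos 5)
Your setup (loss $\Delta_k = C_k^{(0)}-C_k^{(1)}$ with interference treated as colored noise of covariance $\mv{R}_k=\mv{G}_k\mv{S}\mv{G}_k^{\dag}$) matches the paper's, but the logical chain breaks at exactly the step you flag as needing ``genuine care,'' and that step is not optional. The reduction $\Delta_k\le\log_2\bigl|\mv{I}+\tfrac{1}{\phi_k}\mv{R}_k\bigr|$ is correct but throws away too much: the right-hand side can genuinely exceed $\min(M_k,N_k)\log_2(1+\Gamma_k/\phi_k)$. For instance, with $N_k=1$, $M_k=2$, $\phi_k=1$ and $\mv{R}_k=\tfrac{\Gamma_k}{2}\mv{I}$ one gets $2\log_2(1+\Gamma_k/2)>\log_2(1+\Gamma_k)$. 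So no amount of eigenvalue bookkeeping on $\mv{R}_k$ alone (whose rank is controlled by $\min(M_k,M_{t,s})$, not $N_k$) can recover the stated constant; you must go back to the un-reduced expression and argue that the loss lives on the support of $\mv{H}_{p,k}\mv{Q}_k\mv{H}_{p,k}^{\dag}$. You assert this projection argument but do not carry it out, and the intermediate display in your third paragraph mixing ``$r\log_2(1+(\sum_i\lambda_i)/\phi_k)$'' with eigenvalues of $\mv{R}_k$ does not establish it.

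The paper avoids this difficulty by never isolating $\log_2|\mv{I}+\mv{R}_k/\phi_k|$. It writes $C_1=\log_2\bigl|\mv{I}+\mv{S}_k^{1/2}\mv{H}_k^{\dag}(\phi_k\mv{I}+\mv{R}_k)^{-1}\mv{H}_k\mv{S}_k^{1/2}\bigr|$, uses the trace constraint to get the spectral bound $\mv{R}_k\preceq\Gamma_k\mv{I}$, hence $(\phi_k\mv{I}+\mv{R}_k)^{-1}\succeq\tfrac{1}{\phi_k+\Gamma_k}\mv{I}$, and so $C_1\ge\log_2\bigl|\mv{I}+\tfrac{1}{\phi_k+\Gamma_k}\mv{H}_k\mv{S}_k\mv{H}_k^{\dag}\bigr|$. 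Since $\mv{I}+\tfrac{1}{\phi_k+\Gamma_k}\mv{A}\succeq\tfrac{\phi_k}{\phi_k+\Gamma_k}\bigl(\mv{I}+\tfrac{1}{\phi_k}\mv{A}\bigr)$ and the scalar factor only acts on the at most $\mathtt{Rank}(\mv{H}_k\mv{S}_k\mv{H}_k^{\dag})\le\min(M_k,N_k)$ nonzero eigendirections, the bound $C_2-C_1\le\min(M_k,N_k)\log_2(1+\Gamma_k/\phi_k)$ falls out with the correct dimension constant automatically. If you want to salvage your route, you would need to prove the subspace-confinement claim rigorously (e.g., via $\log_2|\mv{I}+\mv{A}|-\log_2|\mv{I}+\mv{A}^{1/2}(\mv{I}+\mv{R}_k/\phi_k)^{-1}\mv{A}^{1/2}|$ with $\mv{A}=\tfrac{1}{\phi_k}\mv{H}_{p,k}\mv{Q}_k\mv{H}_{p,k}^{\dag}$ and then bound the $\mathtt{Rank}(\mv{A})$ eigenvalues of the sandwiched matrix); at that point you have essentially rederived the paper's argument.
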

\begin{proof}
Please refer to Appendix \ref{appendix:capacity loss}.
\end{proof}
From Theorem \ref{theorem:capacity loss}, it follows that by
imposing the total interference-power constraint, it is ensured that
the capacity loss of each primary user due to the secondary
transmission is well regulated. Notice that the obtained upper-bound
for the capacity loss is not a function of the primary/secondary
user's channels, or their transmit covariance matrices. By choosing
$\Gamma_k$ to be sufficiently small compared with the noise power at
each primary receiver, e.g., $\Gamma_k\ll\phi_k$, the capacity loss
due to the secondary transmission can be made arbitrarily small.

At last, we formulate the main problem to be addressed in this
paper. We are interested in the design of the spatial spectrum
$\mv{S}$ at the secondary transmitter so as to maximize its transmit
rate under both its own transmit-power constraint and a set of total
interference-power constraints at $K$ primary receivers.
Accordingly, the optimal $\mv{S}$ can be obtained by solving the
following problem ({\bf P1}):
\begin{eqnarray}
\mathtt{Maximize}&&
\log_2\left|\mv{I}+\mv{H}\mv{S}\mv{H}^{\dag}\right|
\label{eq:capacity P1}
\\ \mathtt {Subject \ to} && \mathtt{Tr}(\mv{S})\leq P_t, \label{eq:Tx power P1}\\
&& \mathtt{Tr}\left(\mv{G}_k\mv{S}\mv{G}_k^{\dag}\right)\leq
\Gamma_k, \ k=1,\ldots,K \label{eq:Ic power P1} \\ && \mv{S}\succeq
0. \label{eq:S P1}
\end{eqnarray}
The above problem maximizes the secondary user's channel capacity
(in bits/complex dimension) obtained by computing the mutual
information \cite{Coverbook} between the channel input and output in
(\ref{eq:channel model}), assuming that the secondary user's channel
is also known perfectly at the secondary receiver. The constraints
in (\ref{eq:Tx power P1}) and (\ref{eq:Ic power P1}) correspond to
the secondary transmitter power constraint and the
interference-power constraints at the primary receivers,
respectively. (\ref{eq:S P1}) is due to the fact that the spectrum
matrix $\mv{S}$ must be positive semi-definite. The problem at hand
is a convex optimization problem \cite{Boydbook} because its
objective function is a concave function of $\mv{S}$ and all of its
constraints specify a convex set of $\mv{S}$. Therefore, for any
arbitrary number of secondary transmit and receive antennas, and
primary receivers each having single or multiple antennas, this
problem can be efficiently solved by using standard convex
optimization techniques, e.g., the interior-point method
\cite{Boydbook}, for which the details are omitted here for brevity.
In the sequel, we will investigate further into this problem so as
to provide more insightful solutions for it that may not be
obtainable from solely a numerical optimization perspective. We will
first consider the simplest case where there is only one
single-antenna primary receiver at present in Section
\ref{sec:single primary receiver}, and then consider the general
case of multiple primary receivers/antennas in Section
\ref{sec:multiple primary receivers}.

\section{One Single-Antenna Primary Receiver}\label{sec:single primary receiver}

The scenario where there is only one single-antenna primary receiver
is considered in this section. Consequently, the channel from the
secondary transmitter to the primary receiver is MISO, and can be
represented by a vector $\mv{g}\in\mathbb{C}^{1 \times M_{t,s}}$.
Let $\gamma$ denote the maximum interference power tolerable at the
primary receiver. Problem {\bf P1} can be then simplified as ({\bf
P2})
\begin{eqnarray}
\mathtt{Maximize}&&
\log_2\left|\mv{I}+\mv{H}\mv{S}\mv{H}^{\dag}\right|
\label{eq:capacity}
\\ \mathtt {Subject \ to} && \mathtt{Tr}(\mv{S})\leq P_t, \label{eq:Tx power}\\  && \mv{g}\mv{S}\mv{g}^{\dag}\leq
\gamma, \label{eq:Ic power} \\ && \mv{S}\succeq 0. \label{eq:S}
\end{eqnarray}

\subsection{MISO Secondary User's Channel}
Consider first the case where there is only a single antenna at the
secondary receiver, i.e., $M_{r,s}=1$ and the secondary user's
channel is also MISO. Hence, $\mv{H}$ is in fact a vector and for
convenience is denoted by $\mv{H}\equiv\mv{h}$,
$\mv{h}\in\mathbb{C}^{1 \times M_{t,s}}$. In this case, we are able
to derive the closed-form solution for the optimal $\mv{S}$. First,
the following two lemmas are needed:
\begin{lemma}\label{lemma:MISO 1}
In the case of MISO secondary user's channel, the optimal $\mv{S}$
for Problem {\bf P2} satisfies $\mathtt{Rank}(\mv{S})=1$.
\end{lemma}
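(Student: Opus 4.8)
The plan is to show that Problem {\bf P2} with MISO secondary channel $\mv{h}\in\mathbb{C}^{1\times M_{t,s}}$ always admits a rank-one optimal covariance by a direct rank-reduction argument. First I would rewrite the objective: for $\mv{H}\equiv\mv{h}$ a row vector, $\log_2|\mv{I}+\mv{h}\mv{S}\mv{h}^{\dag}| = \log_2(1+\mv{h}\mv{S}\mv{h}^{\dag})$, so maximizing capacity is equivalent to maximizing the scalar quantity $\mv{h}\mv{S}\mv{h}^{\dag} = \mathtt{Tr}(\mv{S}\,\mv{h}^{\dag}\mv{h})$. Likewise the interference constraint reads $\mv{g}\mv{S}\mv{g}^{\dag} = \mathtt{Tr}(\mv{S}\,\mv{g}^{\dag}\mv{g}) \le \gamma$, and the power constraint is $\mathtt{Tr}(\mv{S})\le P_t$. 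Thus {\bf P2} becomes: maximize a linear functional of $\mv{S}$ over the intersection of the PSD cone with two linear inequality constraints.

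Next, suppose $\mv{S}^{\star}$ is optimal with $r=\mathtt{Rank}(\mv{S}^{\star})\ge 2$, and write its eigendecomposition $\mv{S}^{\star}=\sum_{i=1}^{r}\sigma_i\mv{v}_i\mv{v}_i^{\dag}$. I would then argue that one can concentrate all the power onto a single direction without violating feasibility or decreasing the objective. The cleanest route: among the rank-one components, pick the direction $\mv{v}_{i^{\star}}$ that maximizes the ``efficiency'' $|\mv{h}\mv{v}_i|^2$; since the objective is linear in $\mv{S}$, $\mv{h}\mv{S}^{\star}\mv{h}^{\dag}=\sum_i\sigma_i|\mv{h}\mv{v}_i|^2 \le \big(\sum_i\sigma_i\big)|\mv{h}\mv{v}_{i^{\star}}|^2 \le P_t\,|\mv{h}\mv{v}_{i^{\star}}|^2$, so the rank-one matrix $\widetilde{\mv{S}}=P_t\,\mv{v}_{i^{\star}}\mv{v}_{i^{\star}}^{\dag}$ does at least as well on the objective and satisfies the power constraint with equality. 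The difficulty is that $\widetilde{\mv{S}}$ need not satisfy the interference constraint, so this crude step must be refined.

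The main obstacle is therefore handling the interference constraint simultaneously, and I expect to resolve it by a dimension-counting / supporting-hyperplane argument rather than the crude bound above. One way: parametrize the problem on the Hermitian-matrix space, note it has two scalar inequality constraints plus the PSD constraint, and invoke the standard result (same idea as in semidefinite programming rank bounds, e.g.\ the Pataki/Barvinok-type bound) that an extreme point of the feasible set intersected with the active constraints has rank $r$ satisfying $r(r+1)/2 \le 2$, forcing $r\le 1$; since a linear objective over a convex set attains its max at an extreme point, there is a rank-one optimum. Alternatively, a constructive perturbation argument: if $\mv{S}^{\star}$ has rank $\ge 2$, find a Hermitian $\mv{\Delta}\ne 0$ in the span of the eigenvectors with $\mathtt{Tr}(\mv{\Delta})=0$ and $\mv{g}\mv{\Delta}\mv{g}^{\dag}=0$ (two linear conditions on a space of dimension $\ge 3$ when $r\ge 2$, hence a nonzero solution exists), move $\mv{S}^{\star}\pm t\mv{\Delta}$ until an eigenvalue hits zero while keeping both linear constraints satisfied; the objective is linear in $t$ so one sign does not decrease it, strictly reducing the rank, and iterate. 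I would present the perturbation version, as it is self-contained and makes explicit why rank $2$ is the break-even dimension; the one point to check carefully is that $\mv{S}^{\star}\pm t\mv{\Delta}\succeq 0$ can be maintained because $\mv{\Delta}$ lies in the range of $\mv{S}^{\star}$.
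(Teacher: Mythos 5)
Your argument is correct, but it takes a genuinely different route from the paper. The paper works directly with the KKT conditions of {\bf P2}: writing the stationarity condition $\mv{h}^{\dag}(1+\mv{h}\mv{S}\mv{h}^{\dag})^{-1}\mv{h}+\mv{\Phi}=\mu\mv{g}^{\dag}\mv{g}+\nu\mv{I}$, it splits into the cases $\nu=0$ (where $\mv{g}$ must be parallel to $\mv{h}$ and the rank-one optimum is exhibited explicitly) and $\nu>0$ (where the right-hand side is full rank, the first left-hand term is rank one, so $\mathtt{Rank}(\mv{\Phi})\geq M_{t,s}-1$, and complementary slackness $\mathtt{Tr}(\mv{\Phi}\mv{S})=0$ forces $\mathtt{Rank}(\mv{S})\leq 1$). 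You instead observe that $\log_2|\mv{I}+\mv{h}\mv{S}\mv{h}^{\dag}|=\log_2(1+\mathtt{Tr}(\mv{h}^{\dag}\mv{h}\mv{S}))$ is a monotone function of a linear functional, so the problem is a linear program over a spectrahedron cut out by two scalar inequalities, and then apply a Pataki-type extreme-point rank bound (or its constructive perturbation form). Both the dimension count and the perturbation argument check out: a traceless nonzero Hermitian $\mv{\Delta}$ supported on the range of $\mv{S}^{\star}$ with $\mv{g}\mv{\Delta}\mv{g}^{\dag}=0$ exists whenever $r\geq 2$ (the supported Hermitian space has real dimension $r^{2}\geq 4$, versus two real linear conditions), it preserves both constraint values exactly, and its restriction to the range has a negative eigenvalue so the sweep terminates at a rank-reducing boundary point. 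What each approach buys: yours is self-contained, avoids duality entirely, and makes transparent that "two constraints" is exactly the break-even count; it establishes existence of a rank-one optimum, which is all the paper needs downstream. The paper's KKT argument is the one that scales to the multi-receiver setting of Section IV, where with $K+1$ constraints your counting bound would only give $r^{2}\leq K+1$, whereas the KKT right-hand side $\sum_k\mu_k\mv{G}_k^{\dag}\mv{G}_k+\nu\mv{I}$ remains full rank and the same rank-one conclusion follows for any $K$.
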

\begin{proof}
Please refer to Appendix \ref{appendix:proof of lemma 1}.
\end{proof}

Lemma \ref{lemma:MISO 1} indicates that in the MISO case, {\it
beamforming} is indeed optimal for the secondary transmitter.
Therefore, $\mv{S}$ can be written in the form of
$\mv{S}=\mv{v}\mv{v}^{\dag}, \mv{v}\in\mathbb{C}^{M_{t,s}\times 1}$.
Problem {\bf P2} can be then simplified as
\begin{eqnarray}
\mathtt{Maximize}&& \log_2\left(1+||\mv{h}\mv{v}||^2\right)
\label{eq:capacity MISO}
\\ \mathtt {Subject \ to} && ||\mv{v}||^2\leq P_t, \label{eq:Tx power MISO}\\  && ||\mv{g}\mv{v}||^2\leq
\gamma. \label{eq:Ic power MISO}
\end{eqnarray}
The second lemma then provides the optimal structure of $\mv{v}$:
\begin{lemma}\label{lemma:MISO 2}
In the case of MISO secondary user's channel, the optimal
beamforming vector $\mv{v}$ is in the form of $\alpha_v
\hat{\mv{g}}+ \beta_v \hat{\mv{h}}_{\bot} $, where
$\hat{\mv{g}}=\frac{\mv{g}^{\dag}}{||\mv{g}||}$ and
$\hat{\mv{h}}_{\bot}=\frac{\mv{h}_{\bot}}{||\mv{h}_{\bot}||}$,
$\mv{h}_{\bot}=\mv{h}^{\dag}-(\hat{\mv{g}}^{\dag}\mv{h}^{\dag})\hat{\mv{g}}$,
$\alpha_v$ and $\beta_v$ are complex weights.
\end{lemma}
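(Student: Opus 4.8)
The plan is to show that any component of $\mv{v}$ lying outside the two-dimensional subspace $\mathcal{U}=\mathrm{span}\{\mv{g}^{\dag},\mv{h}^{\dag}\}$ can only hurt the objective while wasting transmit power, so the optimal $\mv{v}$ must lie in $\mathcal{U}$; the particular orthogonal basis $\{\hat{\mv{g}},\hat{\mv{h}}_{\bot}\}$ for $\mathcal{U}$ is then just a convenient coordinate choice. Concretely, I would write $\mv{v}=\mv{v}_{\parallel}+\mv{v}_{\perp}$ where $\mv{v}_{\parallel}\in\mathcal{U}$ and $\mv{v}_{\perp}\perp\mathcal{U}$. Since $\mv{h}^{\dag}\in\mathcal{U}$ we have $\mv{h}\mv{v}=\mv{h}\mv{v}_{\parallel}$, so the objective $\log_2(1+\|\mv{h}\mv{v}\|^2)$ depends only on $\mv{v}_{\parallel}$; likewise $\mv{g}^{\dag}\in\mathcal{U}$ gives $\mv{g}\mv{v}=\mv{g}\mv{v}_{\parallel}$, so the interference constraint \eqref{eq:Ic power MISO} depends only on $\mv{v}_{\parallel}$; and $\|\mv{v}\|^2=\|\mv{v}_{\parallel}\|^2+\|\mv{v}_{\perp}\|^2\geq\|\mv{v}_{\parallel}\|^2$, so dropping $\mv{v}_{\perp}$ preserves feasibility in \eqref{eq:Tx power MISO} and does not decrease the objective. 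Hence there is an optimal $\mv{v}$ with $\mv{v}=\mv{v}_{\parallel}\in\mathcal{U}$.

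It remains to argue that $\{\hat{\mv{g}},\hat{\mv{h}}_{\bot}\}$ spans $\mathcal{U}$ (generically). By construction $\hat{\mv{g}}$ is the normalized $\mv{g}^{\dag}$, and $\mv{h}_{\bot}=\mv{h}^{\dag}-(\hat{\mv{g}}^{\dag}\mv{h}^{\dag})\hat{\mv{g}}$ is the Gram--Schmidt projection of $\mv{h}^{\dag}$ orthogonal to $\hat{\mv{g}}$, so $\hat{\mv{h}}_{\bot}$ is the normalized $\mv{h}_{\bot}$ and $\mathrm{span}\{\hat{\mv{g}},\hat{\mv{h}}_{\bot}\}=\mathrm{span}\{\mv{g}^{\dag},\mv{h}^{\dag}\}=\mathcal{U}$ whenever $\mv{g}^{\dag}$ and $\mv{h}^{\dag}$ are linearly independent (the degenerate case $\mv{h}^{\dag}\parallel\mv{g}^{\dag}$ is trivial, with $\mv{v}\propto\hat{\mv{g}}$). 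Writing $\mv{v}_{\parallel}=\alpha_v\hat{\mv{g}}+\beta_v\hat{\mv{h}}_{\bot}$ with complex weights $\alpha_v,\beta_v$ then gives exactly the claimed form.

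The only real subtlety is the invocation of Lemma~\ref{lemma:MISO 1}: the decomposition argument above is carried out for the rank-one parametrization $\mv{S}=\mv{v}\mv{v}^{\dag}$, which is justified only because Lemma~\ref{lemma:MISO 1} has already established that a rank-one $\mv{S}$ is optimal for {\bf P2}; I would state this reliance explicitly. A secondary point worth a sentence is that I am not claiming $\mv{v}_{\perp}=0$ for \emph{every} optimizer, only that there \emph{exists} an optimizer of the stated form — which is all the lemma asserts. I expect the projection bookkeeping to be entirely routine; the main thing to be careful about is simply making the ``objective and constraints ignore $\mv{v}_{\perp}$, while the power constraint only benefits'' chain airtight, and handling the linear-dependence corner case so that the two named vectors genuinely form a basis of $\mathcal{U}$.
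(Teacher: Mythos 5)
Your proof is correct and reaches the paper's conclusion by a closely related but not identical decomposition. The paper takes the optimal $\mv{v}'$, writes it as $\alpha_{v'}\hat{\mv{g}}+\beta_{v'}\hat{\mv{b}}$ with $\hat{\mv{b}}^{\dag}\hat{\mv{g}}=0$, and argues that swapping $\hat{\mv{b}}$ for $\hat{\mv{h}}_{\bot}$ leaves the transmit power and the interference $\|\mv{g}\mv{v}\|^2$ unchanged while not decreasing $\|\mv{h}\mv{v}\|^2$, since $\hat{\mv{h}}_{\bot}$ is the direction in the orthogonal complement of $\hat{\mv{g}}$ best aligned with $\mv{h}$; you instead split $\mv{v}=\mv{v}_{\parallel}+\mv{v}_{\perp}$ relative to $\mathcal{U}=\mathrm{span}\{\mv{g}^{\dag},\mv{h}^{\dag}\}$ and observe that the discarded component influences neither the objective nor the interference constraint and can only waste power. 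The two arguments are the same in spirit (anything outside $\mathrm{span}\{\hat{\mv{g}},\hat{\mv{h}}_{\bot}\}$ is useless), but yours is marginally tighter: the paper's replacement step, as written, implicitly requires realigning the phase of $\beta_{v'}$ after the swap so that $\|\alpha_{v'}\mv{h}\hat{\mv{g}}+\beta_{v'}\mv{h}\hat{\mv{h}}_{\bot}\|$ actually dominates $\|\alpha_{v'}\mv{h}\hat{\mv{g}}+\beta_{v'}\mv{h}\hat{\mv{b}}\|$ (harmless, since the phase affects neither constraint, but unstated), whereas your projection argument sidesteps phases entirely. Your explicit remarks on the reliance on Lemma~\ref{lemma:MISO 1}, on the degenerate case $\mv{h}^{\dag}\parallel\mv{g}^{\dag}$, and on the ``there exists an optimizer of this form'' reading of the claim are all appropriate and go slightly beyond what the paper records.
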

\begin{proof}
Please refer to Appendix \ref{appendix:proof of lemma 2}.
\end{proof}

Lemma \ref{lemma:MISO 2} states that the optimal beamforming vector
should lie in the space spanned jointly by $\mv{g}^{\dag}$ and the
projection of $\mv{h}^{\dag}$ into the null space of
$\mv{g}^{\dag}$. Using Lemma \ref{lemma:MISO 2} and let
$\mv{h}^{\dag}=\alpha_h \hat{\mv{g}} + \beta_h \hat{\mv{h}}_{\bot}$,
the optimal weights $\alpha_v$ and $\beta_v$ can be then obtained by
considering the following equivalent problem of that in
(\ref{eq:capacity MISO})-(\ref{eq:Ic power MISO}):
\begin{eqnarray}
\mathtt{Maximize}&&
||\alpha_h^{\dag}\alpha_v+\beta_h^{\dag}\beta_v||^2
\\ \mathtt {Subject \ to} && ||\alpha_v||^2+ ||\beta_v||^2\leq P_t, \\  && ||\mv{g}||^2||\alpha_v||^2\leq
\gamma.
\end{eqnarray}
The above problem can be solved by using standard geometry. Notice
that the solution for the above problem is trivial in the case of
$\alpha_h=0$ or $\beta_h=0$. Hence, without loss of generality, it
is assumed that $\alpha_h\neq0$ and $\beta_h\neq0$. To summarize,
the following theorem is established:
\begin{theorem}\label{theorem:MISO}
In the case of MISO secondary user's channel, the optimal transmit
covariance matrix $\mv{S}$ for Problem {\bf P2} can be written in
the form of $\mv{S}=\mv{v}\mv{v}^{\dag}$, where $\mv{v}=\alpha_v
\hat{\mv{g}}+ \beta_v \hat{\mv{h}}_{\bot} $, and $\alpha_v$ and
$\beta_v$ are given by
\begin{itemize}
\item Case I: If $\gamma \geq
\frac{||\mv{g}||^2||\alpha_h||^2}{||\alpha_h||^2+||\beta_h||^2}P_t$,
\begin{equation}
\alpha_v=\sqrt{\frac{P_t}{||\alpha_h||^2+||\beta_h||^2}}\alpha_h, \
\ \beta_v=\sqrt{\frac{P_t}{||\alpha_h||^2+||\beta_h||^2}}\beta_h.
\nonumber
\end{equation}
\item Case II: If $\gamma <
\frac{||\mv{g}||^2||\alpha_h||^2}{||\alpha_h||^2+||\beta_h||^2}P_t$,
\begin{eqnarray*}
\alpha_v=\frac{\sqrt{\gamma}}{||\mv{g}||}\frac{\alpha_h}{||\alpha_h||},
\ \beta_v =
  \sqrt{P_t-\frac{\gamma}{||\mv{g}||^2}}\frac{\beta_h}{||\beta_h||}.
\end{eqnarray*}
\end{itemize}
\end{theorem}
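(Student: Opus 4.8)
The plan is to pick up exactly where the excerpt leaves off: Lemmas \ref{lemma:MISO 1} and \ref{lemma:MISO 2} have already reduced Problem {\bf P2} in the MISO case to the two-real-variable program of maximizing $||\alpha_h^{\dag}\alpha_v+\beta_h^{\dag}\beta_v||^2$ subject to $||\alpha_v||^2+||\beta_v||^2\leq P_t$ and $||\mv{g}||^2||\alpha_v||^2\leq\gamma$, under the standing assumption $\alpha_h\neq 0$, $\beta_h\neq 0$. What remains is to solve this program in closed form and map the solution back to $\mv{v}=\alpha_v\hat{\mv{g}}+\beta_v\hat{\mv{h}}_{\bot}$.

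First I would align phases. The two constraints involve $\alpha_v,\beta_v$ only through their magnitudes $a:=||\alpha_v||$ and $b:=||\beta_v||$, whereas the triangle inequality gives $||\alpha_h^{\dag}\alpha_v+\beta_h^{\dag}\beta_v||\leq ||\alpha_h||\,a+||\beta_h||\,b$, with equality achieved by choosing the phases of $\alpha_v,\beta_v$ so that $\alpha_h^{\dag}\alpha_v$ and $\beta_h^{\dag}\beta_v$ are both nonnegative reals, i.e. $\alpha_v=a\,\alpha_h/||\alpha_h||$ and $\beta_v=b\,\beta_h/||\beta_h||$. Hence it suffices to maximize $||\alpha_h||\,a+||\beta_h||\,b$ over $\{\,a,b\geq 0:\ a^2+b^2\leq P_t,\ a\leq\sqrt{\gamma}/||\mv{g}||\,\}$, a planar geometry problem.

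Next I would solve that planar problem. Since the objective is strictly increasing in $b$, the transmit-power constraint must be active at the optimum, so $b=\sqrt{P_t-a^2}$ and one maximizes $g(a):=||\alpha_h||\,a+||\beta_h||\sqrt{P_t-a^2}$ over $a\in[0,\min(\sqrt{P_t},\sqrt{\gamma}/||\mv{g}||)]$. A short computation gives $g''(a)<0$, so $g$ is strictly concave with unique interior maximizer $a^{\star}=\sqrt{P_t}\,||\alpha_h||/\sqrt{||\alpha_h||^2+||\beta_h||^2}$ (the Cauchy--Schwarz point for the unconstrained power allocation). Two cases follow. If $\sqrt{\gamma}/||\mv{g}||\geq a^{\star}$, equivalently $\gamma\geq\frac{||\mv{g}||^2||\alpha_h||^2}{||\alpha_h||^2+||\beta_h||^2}P_t$, the interference constraint is inactive and the optimum is $a=a^{\star}$, $b=\sqrt{P_t}\,||\beta_h||/\sqrt{||\alpha_h||^2+||\beta_h||^2}$; in the complementary case $\sqrt{\gamma}/||\mv{g}||<a^{\star}$ lies on the part of $[0,a^{\star}]$ where $g$ is increasing, so the constrained optimum sits at the right endpoint $a=\sqrt{\gamma}/||\mv{g}||$, giving $b=\sqrt{P_t-\gamma/||\mv{g}||^2}$. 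These are Case I and Case II, respectively.

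Finally, substituting back through $\alpha_v=a\,\alpha_h/||\alpha_h||$, $\beta_v=b\,\beta_h/||\beta_h||$ and simplifying (e.g. $a^{\star}\,\alpha_h/||\alpha_h||=\sqrt{P_t/(||\alpha_h||^2+||\beta_h||^2)}\,\alpha_h$) yields precisely the stated expressions for $\alpha_v,\beta_v$ in both cases, and $\mv{S}=\mv{v}\mv{v}^{\dag}$ with this $\mv{v}$ then follows from Lemma \ref{lemma:MISO 1}. The only genuinely nonroutine step is the planar optimization: one must argue that the power constraint always binds, and that when the interference constraint is relevant it binds with equality (i.e. that $a$ should be pushed all the way to its boundary rather than stopped short), both being consequences of the monotonicity and concavity of $g$ but worth spelling out; the remainder is phase alignment plus Cauchy--Schwarz.
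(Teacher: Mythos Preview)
Your proposal is correct and is precisely what the paper means when it says the reduced two-variable problem ``can be solved by using standard geometry'': phase alignment reduces to nonnegative reals $a,b$, activity of the transmit-power constraint reduces to a single concave function $g(a)$ on an interval, and the Cauchy--Schwarz maximizer $a^{\star}$ together with the clipping at $\sqrt{\gamma}/\|\mv{g}\|$ yields exactly the two stated cases. There is no substantive difference in approach; you have simply written out the geometric argument that the paper leaves implicit.
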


In Theorem \ref{theorem:MISO}, Case I corresponds to the
interference-power constraint in (\ref{eq:Ic power MISO}) being
inactive. In this case, it can be verified that
$\mv{v}=\frac{\sqrt{P_t}}{||\mv{h}||}\mv{h}^{\dag}$. Therefore, the
optimal beamforming vector is identical to that obtained by the
pre-maximal-ratio-combining (MRC) principle for the conventional
MISO point-to-point transmission. On the other hand, Case II
corresponds to the interference-power constraint being active and,
hence, the transmit power $||\alpha_v||^2$ allocated in the
direction of $\hat{\mv{g}}$ needs to be regulated by $\gamma$.

\subsection{MIMO Secondary User's Channel} \label{subsec:MIMO}
In the case that the secondary user's channel is MIMO, i.e.,
$M_{t,s}>1$ and $M_{r,s}>1$,  there is in general no closed-form
solution for the optimal $\mv{S}$, and Problem {\bf P2} needs to be
solved numerically. Unlike the MISO case, it is possible that
$\mathtt{Rank}(\mv{S})>1$ in the MIMO case, which implies that {\it
spatial multiplexing} is optimal instead of beamforming. Without the
interference-power constraint in (\ref{eq:Ic power}), the optimal
$\mv{S}$ for Problem {\bf P2} can be obtained from the SVD of
$\mv{H}$, along with the water-filling (WF) -based power allocation
(e.g., \cite{Telatar99}, \cite{Coverbook}). The SVD-based
transmission is not only capacity-achieving, but of more practical
significance, it diagonalizes the secondary user's MIMO channel
matrix and decomposes it into parallel AWGN sub-channels, and
thereby, reduces substantially the overall encoding/decoding
complexity since independent encoding/decoding can be applied over
these sub-channels. Unfortunately, when the interference-power
constraint in (\ref{eq:Ic power}) is applied, the optimal $\mv{S}$
is in general different from that obtained from the channel SVD, and
hence does not diagonalize the secondary user's MIMO channel. As a
result, sophisticated encoding and decoding methods need to be used
in order to achieve the secondary user's channel capacity. One
capacity-achieving method is to use a {\it single constant-rate}
Gaussian code-book \cite{Caire99} to encode the information bits and
then spread the coded symbols to all transmit antennas. At the
receiver, the maximum-likelihood (ML) -based detection and iterative
decoding (e.g., \cite{Brink03}) are applied to decode the whole
codeword. Alternatively, the capacity can also be achieved by using
{\it multiple variable-rate} Gaussian code-books each for one of $d$
data streams. At the receiver, these data streams are decoded by
optimum decision-feedback-based successive decoding (e.g.,
\cite{Cioffi97}-\cite{Foschini96}).

This paper presents two suboptimal algorithms that obtain
closed-form solutions for $\mv{S}$ in the case of MIMO secondary
user's channel. The computational complexity for both algorithms is
much lower than that of the interior-point method for Problem {\bf
P2}. Furthermore, both algorithms are based on the SVD of the
secondary user's channel matrix and decompose the secondary MIMO
channel into parallel sub-channels. The main difference between
these two algorithms lies in that, in the first one, the channel
decomposition is based on the SVD of $\mv{H}$ directly and is thus
referred to as the {\it Direct-Channel SVD} (D-SVD), while in the
second one, the channel decomposition is based on the SVD of the
projection of $\mv{H}$ into the null space of $\mv{g}^{\dag}$, and
is thus referred to as the {\it Projected-Channel SVD} (P-SVD).
Notice that these two algorithms can also be applied in the previous
case of MISO secondary user's channel to obtain closed-form (but in
general suboptimal) solutions for transmit beamforming vector
$\mv{v}$.

\subsubsection{Direct-Channel SVD (D-SVD)}

The eigenvalue decomposition of $\mv{S}$ is represented by the
precoding matrix $\mv{V}$ and the power allocation $\mv{\Sigma}$ in
(\ref{eq:SVD}). In the D-SVD, the precoding matrix $\mv{V}$ is
obtained from the SVD of $\mv{H}$, which can be expressed as
$\mv{H}=\mv{Q}\mv{\Lambda}^{1/2}\mv{U}^{\dag}$ where
$\mv{Q}\in\mathbb{C}^{M_{r,s}\times M_s}$ and
$\mv{U}\in\mathbb{C}^{M_{t,s}\times M_s}$ are matrices with
othornormal columns, $M_s=\min(M_{t,s},M_{r,s})$, and $\mv{\Lambda}$
is a $M_s \times M_s$ diagonal and positive matrix for which its
diagonal elements are denoted by $\lambda_1\geq\lambda_2 \geq \ldots
\geq \lambda_{M_s}$. Let $\mv{V}=\mv{U}$, and furthermore,
$\tilde{\mv{y}}(n)=\mv{Q}^{\dag}\mv{y}(n)$ and
$\mv{x}(n)=\mv{U}\tilde{\mv{x}}(n)$, the secondary user's MIMO
channel in (\ref{eq:channel model}) can be equivalently written as
\begin{eqnarray}
\tilde{\mv{y}}(n)&=&
\mv{Q}^{\dag}\mv{H}\mv{U}\tilde{\mv{x}}(n)+\tilde{\mv{z}}(n),
\\ &=&\mv{\Lambda}^{1/2}\tilde{\mv{x}}(n)+\tilde{\mv{z}}(n),\label{eq:equiv channel}
\end{eqnarray}
where $\tilde{\mv{z}}(n)=\mv{Q}^{\dag}\mv{z}(n)$  and
$\tilde{\mv{z}}(n)\sim\mathcal{CN}(0,\mv{I})$. Notice that the D-SVD
diagonalizes the MIMO channel and decomposes it into $M_s$ parallel
sub-channels with channel gains $\sqrt{\lambda_i}, i=1,\ldots,M_s$.

Let $\mv{U}\triangleq[\mv{u}_1, \mv{u}_2, \ldots, \mv{u}_{M_s}]$,
and $\alpha_i=||\mv{g}\mv{u}_{i}||^2, i=1,\ldots,M_s$. Considering
the equivalent channel in (\ref{eq:equiv channel}) and noticing that
$\mathbb{E}[\tilde{\mv{x}}(n)\tilde{\mv{x}}^{\dag}(n)]=\mv{\Sigma}$,
the optimal power assignments $\sigma_i$'s can be obtained by
solving the following equivalent problem ({\bf P3}) derived from
Problem {\bf P2}:
\begin{eqnarray}
\mathtt{Maximize}&& \sum_{i=1}^{M_s}\log_2\left(1+\lambda_i
\sigma_i\right)
\\ \mathtt {Subject \ to} && \sum_{i=1}^{M_s}\sigma_i\leq P_t, \label{eq:Tx power SVD}\\ &&
\sum_{i=1}^{M_s} \alpha_i\sigma_i\leq \gamma, \label{eq:Ic power SVD}\\
&& \sigma_i\geq 0, \forall i.
\end{eqnarray}
The optimal $\sigma_i$'s for the above problem can be shown to have
the following form of {\it multi-level} WF solutions:
\begin{equation}\label{eq:modified WF}
\sigma_i=\left(\frac{1}{\nu+\alpha_i\mu}-\frac{1}{\lambda_i}\right)^+,
\ i=1,\ldots,M_s,
\end{equation}
where $\nu$ and $\mu$ can be shown to be the non-negative Lagrange
multipliers \cite{Boydbook} associated with the transmit-power
constraint in (\ref{eq:Tx power SVD}) and the interference-power
constraint in (\ref{eq:Ic power SVD}), respectively, and can be
obtained by the following algorithm ({\bf A1}):
\begin{itemize}
\item {\bf Given} $\mu\in [0, \hat{\mu}]$
\item {\bf Initialize} $\mu_{\min}=0$, $\mu_{\max}=\hat{\mu}$
\item {\bf Repeat}
\begin{itemize}
\item[1.] Set $\mu\leftarrow\frac{1}{2}(\mu_{\min}+\mu_{\max})$.
\item[2.] Find the minimum $\nu$, $\nu\geq 0$, with which $\sum_{i=1}^{M_s}(\frac{1}{\nu+\alpha_i\mu}-\frac{1}{\lambda_i})^+\leq
P_t$. Substitute the obtained $\nu$ into (\ref{eq:modified WF}) to
obtain $\sigma_i$'s.
\item[3.] Update $\mu$ by the bisection method \cite{Boydbook}: If
$\sum_{i=1}^{M_s}\alpha_i\sigma_i\geq \gamma$, set
$\mu_{\min}\leftarrow\mu$; otherwise, $\mu_{\max}\leftarrow\mu$.
\end{itemize}
\item {\bf Until} $\mu_{\max}-\mu_{\min}\leq\delta_{\mu}$ where $\delta_{\mu}$ is a small positive constant
that controls the algorithm accuracy.
\end{itemize}

\begin{remark}\label{remark:low-SNR}
If the interference-power constraint in (\ref{eq:Ic power SVD}) is
inactive, it follows from the Karush-Kuhn-Tacker (KKT) conditions
\cite{Boydbook} that $\mu=0$. From (\ref{eq:modified WF}), the
allocated powers can be then written as
$\sigma_i=\left(\frac{1}{\nu}-\frac{1}{\lambda_i}\right)^+, \
i=1,\ldots,M_s$, which become equal to the standard WF solutions
with a constant water-level $\frac{1}{\nu}$. Hence, the D-SVD is
indeed optimal if $\gamma$ is sufficiently large such that the
interference-power constraint is inactive.
\end{remark}

\begin{remark}
If the secondary user's channel is MISO, it is not hard to verify
that the D-SVD algorithm results in $\mv{S}=\mv{v}\mv{v}^{\dag}$,
where
$\mv{v}=\frac{\sqrt{\min(\gamma,P_t)}}{||\mv{h}||}\mv{h}^{\dag}$.
Comparing it with the optimal $\mv{S}$ by Theorem
\ref{theorem:MISO}, it follows that the D-SVD is optimal only if
Case I of Theorem \ref{theorem:MISO} is true, i.e., $\gamma$ is
sufficiently large such that the interference-power constraint is
inactive.
\end{remark}

\subsubsection{Projected-Channel SVD (P-SVD)}
In the P-SVD, $\mv{S}$ is designed based on the so-called
zero-forcing (ZF) criterion so as to completely avoid any
interference at the primary receiver. It is noted that the ZF
criterion has also been used in the MIMO broadcast channel (MIMO-BC)
for design of downlink precoding that removes any co-channel
interference between users (e.g., \cite{Spencer04}-\cite{Tejera05}).
For the P-SVD, the secondary user's channel $\mv{H}$ is first
projected into the null space of $\mv{g}^{\dag}$ as
\begin{equation}\label{eq:channel projection}
\mv{H}_{\bot}=\mv{H}\left(\mv{I}-\hat{\mv{g}}\hat{\mv{g}}^{\dag}\right).
\end{equation}
Let the SVD of the projected channel $\mv{H}_{\bot}$ be expressed as
$\mv{H}_{\bot}=\mv{Q}_{\bot}\mv{\Lambda}_{\bot}(\mv{U}_{\bot})^{\dag}$.
Then, the precoding matrix $\mv{V}$ for $\mv{S}$ is taken as
$\mv{V}=\mv{U}_{\bot}$. From (\ref{eq:channel projection}), by
multiplying both the left-hand-side (LHS) and right-hand-side (RHS)
with $\hat{\mv{g}}$, it can be verified that
$(\mv{U}_{\bot})^{\dag}\hat{\mv{g}}=0$. Since
$\mv{S}=\mv{U}_{\bot}\mv{\Sigma}(\mv{U}_{\bot})^{\dag}$, it then
follows that the resultant interference power at the primary
receiver is zero, i.e., $\mv{g}\mv{S}\mv{g}^{\dag}=0$, and hence the
interference-power constraint (\ref{eq:Ic power}) for Problem {\bf
P2} is always satisfied for any $\gamma$, $\gamma\geq 0$.

Like the D-SVD, the P-SVD also diagonalizes the secondary user's
MIMO channel. To see it, let
$\bar{\mv{y}}(n)=(\mv{Q}_{\bot})^{\dag}\mv{y}(n)$ and
$\mv{x}(n)=\mv{U}_{\bot}\bar{\mv{x}}(n)$, and the secondary channel
in (\ref{eq:channel model}) can be equivalently represented by
\begin{eqnarray}
\bar{\mv{y}}(n)&=&
(\mv{Q}_{\bot})^{\dag}\mv{H}\mv{U}_{\bot}\bar{\mv{x}}(n)+\bar{\mv{z}}(n),
\\ &=& (\mv{Q}_{\bot})^{\dag}(\mv{H}_{\bot}+
\mv{H}\hat{\mv{g}}\hat{\mv{g}}^{\dag})\mv{U}_{\bot}\bar{\mv{x}}(n)+\bar{\mv{z}}(n),
\label{eq:derivation 1}
\\ &=&
(\mv{Q}_{\bot})^{\dag}\mv{H}_{\bot}\mv{U}_{\bot}\bar{\mv{x}}(n)+\bar{\mv{z}}(n),
\label{eq:derivation 2}
\\ &=&(\mv{\Lambda}_{\bot})^{1/2}\bar{\mv{x}}(n)+\bar{\mv{z}}(n) \label{eq:P-SVD channel},
\end{eqnarray}
where $\bar{\mv{z}}(n)=(\mv{Q}_{\bot})^{\dag}\mv{z}(n)$  and
$\bar{\mv{z}}(n)\sim\mathcal{CN}(0,\mv{I})$, (\ref{eq:derivation 1})
is from (\ref{eq:channel projection}), and (\ref{eq:derivation 2})
is by $\hat{\mv{g}}^{\dag}\mv{U}_{\bot}=0$. Because the projection
in (\ref{eq:channel projection}) reduces the channel rank at most by
one, $\mv{\Lambda}_{\bot}$ has $M^{\bot}_s=\min(M_{t,s}-1,M_{r,s})$
diagonal elements, denoted by
$\lambda^{\bot}_1\geq\ldots\geq\lambda^{\bot}_{M^{\bot}_s}$. Hence,
the secondary MIMO channel is decomposed into $M^{\bot}_s$
sub-channels with channel gains $\sqrt{\lambda^{\bot}_i}, i=1,\ldots
M^{\bot}_s$. The power allocation $\mv{\Sigma}$ for these
sub-channels can be then obtained by considering the equivalent
channel (\ref{eq:P-SVD channel}) with
$\mathbb{E}[\bar{\mv{x}}(n)\bar{\mv{x}}^{\dag}(n)]=\mv{\Sigma}$, as
the standard WF solutions
$\sigma_i=\left(\nu'-\frac{1}{\lambda_i^{\bot}}\right)^+, \
i=1,\ldots,M^{\bot}_s$, where $\nu'$ is the constant water-level
such that $\sum_{i=1}^{M^{\bot}_s}\sigma_i=P_t$.

\begin{remark}
If the interference-power constraint for Problem {\bf P2} is indeed
$\gamma=0$, it is conjectured that the P-SVD is optimal. This
conjecture is proved in Appendix \ref{appendix:optimality of P-SVD}.
\end{remark}

\begin{remark}
If the secondary user's channel is MISO, it is not hard to verify
that the P-SVD algorithm results in $\mv{S}=\mv{v}\mv{v}^{\dag}$,
where $\mv{v}= \sqrt{P_t}\frac{\beta_h}{||\beta_h||}
\hat{\mv{h}}_{\bot}$. Comparing it with the optimal $\mv{S}$ by
Theorem \ref{theorem:MISO}, it follows that the P-SVD is in general
suboptimal unless $\gamma=0$ and then Case II of Theorem
\ref{theorem:MISO} applies.
\end{remark}

\subsubsection{Performance Comparison} \label{subsubsec:performance
comparison}

Comparing the computational complexity of the D-SVD and P-SVD, it is
noticed that the former has a larger complexity for determining
optimal power allocations due to the multi-level WF, while so does
the latter for obtaining the precoding matrix due to the additional
channel projection. In the following, the maximum achievable rate
(or the capacity) for the secondary link by the optimal $\mv{S}$,
and that by the two SVD-based algorithms are compared under two
extreme scenarios: $P_t \rightarrow 0$ and $P_t \rightarrow \infty$,
both under the assumption that $\gamma$ is finite and $\gamma>0$.
Notice that the former case can also be considered as asymptotically
low signal-to-noise ratio (SNR) at the secondary receiver (if the
transmit power $P$ takes its maximum value $P_t$) while the latter
case as asymptotically high SNR.

As $P_t \rightarrow 0$, the D-SVD is the optimal solution. This can
be easily verified similar like in Remark \ref{remark:low-SNR} by
observing that the interference-power constraint in (\ref{eq:Ic
power}) eventually becomes inactive as $P_t\rightarrow 0$. In
contrast, the P-SVD may incur a non-negligible rate loss in this
case. This is so because the ZF-based projection causes part of the
secondary user's channel energy to be lost. Since as $P_t\rightarrow
0$, both SVD-based algorithms assign the total transmit power to the
sub-channel with the largest channel gain ($\lambda_1$ for the D-SVD
and $\lambda^{\bot}_1$ for the P-SVD), it can be easily verified
that the achievable rates of the D-SVD and P-SVD become
$\log_2(1+\lambda_1P_t)$ and $\log_2(1+\lambda^{\bot}_1P_t)$,
respectively. Notice that $\lambda_1\geq \lambda^{\bot}_1$. Because
$\log(1+x)\cong x$ as $x\rightarrow 0$, the following theorem is
obtained for the achievable rate by the optimal $\mv{S}$, the D-SVD
and the P-SVD, denoted by $R_{\rm opt}$, $R_{\rm D-SVD}$, and
$R_{\rm P-SVD}$, respectively:
\begin{theorem}\label{theorem:low P}
As $P_t \rightarrow 0$,  $R_{\rm opt}=R_{\rm D-SVD}=
\frac{\lambda_1P_t}{\log2}$, and $R_{\rm
P-SVD}=\frac{\lambda_1^{\bot}P_t}{\log2}$.
\end{theorem}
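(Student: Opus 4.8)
The plan is to prove the three claims in turn, relying on the uniform slackness of the interference-power constraint at low power together with a simple two-sided estimate of the water-filling value. For the equality $R_{\rm opt}=R_{\rm D-SVD}$, I would first observe that any $\mv{S}$ feasible for Problem~{\bf P2} satisfies $\mv{g}\mv{S}\mv{g}^{\dag}\le ||\mv{g}||^2\,\mathtt{Tr}(\mv{S})\le ||\mv{g}||^2 P_t$, using $\mv{S}\succeq 0$ and (\ref{eq:Tx power}). Since $\gamma>0$ is fixed, for every $P_t<\gamma/||\mv{g}||^2$ the interference-power constraint (\ref{eq:Ic power}) is \emph{strictly} slack at all feasible points, so Problem~{\bf P2} reduces to the ordinary MIMO capacity problem under the single transmit-power constraint, whose optimal solution is the SVD of $\mv{H}$ with water-filling power allocation, i.e., exactly the D-SVD (equivalently, by the KKT argument in Remark~\ref{remark:low-SNR}, $\mu=0$ and (\ref{eq:modified WF}) becomes standard water-filling). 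Hence $R_{\rm opt}=R_{\rm D-SVD}$ for all sufficiently small $P_t$.

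Next I would pin down the low-power asymptotics by sandwiching. Using $\log_2(1+x)\le x/\log 2$ and $\lambda_1\ge\lambda_i$,
\[
R_{\rm D-SVD}=\max_{\sigma_i\ge 0,\ \sum_i\sigma_i\le P_t}\ \sum_{i=1}^{M_s}\log_2(1+\lambda_i\sigma_i)\ \le\ \frac{1}{\log 2}\sum_{i=1}^{M_s}\lambda_i\sigma_i\ \le\ \frac{\lambda_1 P_t}{\log 2},
\]
while the feasible choice $\sigma_1=P_t$ and $\sigma_i=0$ for $i\ge 2$ gives $R_{\rm D-SVD}\ge\log_2(1+\lambda_1 P_t)$. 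Dividing through by $P_t$ and letting $P_t\to 0$, both bounds converge to $\lambda_1/\log 2$, so $R_{\rm D-SVD}=\frac{\lambda_1 P_t}{\log 2}+o(P_t)$, and by the previous paragraph $R_{\rm opt}$ admits the same expansion. Applying the identical argument to the parallel sub-channels of the projected channel $\mv{H}_{\bot}$, which have gains $\lambda_1^{\bot}\ge\cdots\ge\lambda_{M^{\bot}_s}^{\bot}$ and carry the full power $P_t$ under the water-filling rule $\sigma_i=(\nu'-1/\lambda_i^{\bot})^+$, yields $R_{\rm P-SVD}=\frac{\lambda_1^{\bot}P_t}{\log 2}+o(P_t)$.

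I do not expect a real obstacle here; the one point requiring care is the \emph{uniform} bound $\mv{g}\mv{S}\mv{g}^{\dag}\le ||\mv{g}||^2 P_t$ over all feasible $\mv{S}$, which is what makes the interference constraint provably inactive rather than merely inactive at one particular candidate. It is also worth noting that the two-sided estimate above sidesteps any case analysis on whether $\lambda_1$ (or $\lambda_1^{\bot}$) is a repeated eigenvalue, since only the leading-order term in $P_t$ is being claimed. Finally, I would close by recording that the rank-one projection (\ref{eq:channel projection}) can only decrease singular values (Cauchy interlacing), so $\lambda_1\ge\lambda_1^{\bot}$, and hence the P-SVD is asymptotically strictly suboptimal at low $P_t$ whenever this inequality is strict.
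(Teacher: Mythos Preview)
Your proposal is correct and follows essentially the same route as the paper: for sufficiently small $P_t$ the interference-power constraint is inactive, so D-SVD coincides with the unconstrained optimum, and the linearization $\log_2(1+x)\sim x/\log 2$ gives the stated slopes for both D-SVD and P-SVD. Your explicit threshold $P_t<\gamma/\|\mv{g}\|^2$ and sandwich bound are in fact slightly more careful than the paper's informal argument, which simply asserts that water-filling concentrates all power on the strongest sub-channel as $P_t\to 0$ and refers back to Remark~\ref{remark:low-SNR}; your version also cleanly sidesteps the repeated-eigenvalue case, as you note.
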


On the other hand, as $P_t \rightarrow \infty$, the achievable rate
by the P-SVD becomes close to the secondary channel capacity
achievable by the optimal $\mv{S}$. This is so because as $P_t$
increases and eventually exceeds some certain threshold, any
additional transmit power needs to be allocated into the projected
channel $\mv{H}^{\bot}$ in order not to violate the
interference-power constraint. From (\ref{eq:P-SVD channel}), it can
be easily verified that as $P_t \rightarrow \infty$, the achievable
rate of the P-SVD has a linear increase with $\log_2 P_t$ by a
factor of $M^{\bot}_s=\min(M_{r,s},M_{t,s}-1)$, which is usually
termed as the pre-log factor, the degree of freedom, or the spatial
multiplexing gain in the literature. The following lemma states that
$R_{\rm opt}$ also has a spatial multiplexing gain of $M^{\bot}_s$,
indicating that the P-SVD is indeed optimal (in terms of the
achievable spatial multiplexing gain) as $P_t\rightarrow \infty$.
\begin{lemma}\label{lemma:multiplexing gain}
As $P_t \rightarrow \infty$, $\frac{R_{\rm
opt}}{\log_2P_t}=\min(M_{r,s},M_{t,s}-1)$.
\end{lemma}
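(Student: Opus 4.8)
The plan is to establish matching upper and lower bounds on the high-SNR growth rate of $R_{\rm opt}$. The lower bound is immediate: since the P-SVD produces a feasible $\mv{S}$ for Problem {\bf P2}, we have $R_{\rm opt}\geq R_{\rm P-SVD}$, and as already argued from (\ref{eq:P-SVD channel}), $R_{\rm P-SVD}=\sum_{i=1}^{M^{\bot}_s}\log_2(1+\lambda_i^{\bot}\sigma_i)$ with water-filling powers $\sigma_i=\Theta(P_t/M^{\bot}_s)$, so $\lim_{P_t\to\infty}R_{\rm P-SVD}/\log_2 P_t = M^{\bot}_s = \min(M_{r,s},M_{t,s}-1)$. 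Hence $\liminf_{P_t\to\infty} R_{\rm opt}/\log_2 P_t \geq \min(M_{r,s},M_{t,s}-1)$.

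The matching upper bound is the substantive part. The idea is that the interference-power constraint $\mv{g}\mv{S}\mv{g}^{\dag}\leq\gamma$ with $\gamma$ fixed effectively forbids the transmit covariance from growing without bound in the direction $\hat{\mv{g}}$; all the power growth must occur in the $(M_{t,s}-1)$-dimensional subspace orthogonal to $\mv{g}^{\dag}$. Concretely, write $\mv{S}=\mv{S}_{\bot}+\mv{S}_{g}$ where $\mv{S}_{\bot}=(\mv{I}-\hat{\mv{g}}\hat{\mv{g}}^{\dag})\mv{S}(\mv{I}-\hat{\mv{g}}\hat{\mv{g}}^{\dag})$ is the compression onto the null space of $\mv{g}^{\dag}$ and $\mv{S}_g = \mv{S}-\mv{S}_{\bot}$ collects the remaining terms. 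Using $\log_2|\mv{I}+\mv{H}\mv{S}\mv{H}^{\dag}|\leq\log_2|\mv{I}+\mv{H}\mv{S}_{\bot}\mv{H}^{\dag}|+\log_2|\mv{I}+\mv{H}\mv{S}_g\mv{H}^{\dag}|$ (subadditivity of $\log\det$ on a sum of positive semidefinite matrices, or equivalently the fact that $|\mv{I}+\mv{A}+\mv{B}|\leq|\mv{I}+\mv{A}||\mv{I}+\mv{B}|$ for $\mv{A},\mv{B}\succeq0$), one bounds each piece. The second piece involves only the rank-deficient part tied to $\hat{\mv{g}}$: since $\mv{S}_g$ has all its energy within a bounded multiple of $\gamma$ plus cross terms controlled by $\mathtt{Tr}(\mv{S})\le P_t$ via Cauchy--Schwarz, $\log_2|\mv{I}+\mv{H}\mv{S}_g\mv{H}^{\dag}|$ grows at most like a constant times $\log_2 P_t$ but — crucially — with pre-log factor at most $1$, since $\mv{S}_g$ contributes at most one extra nonzero eigen-direction once the $\hat{\mv{g}}$ component is pinned. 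The first piece satisfies $\log_2|\mv{I}+\mv{H}\mv{S}_{\bot}\mv{H}^{\dag}|\leq\log_2|\mv{I}+\mv{H}_{\bot}'\mv{S}_{\bot}(\mv{H}_{\bot}')^{\dag}|$ where the effective channel has rank $\min(M_{r,s},M_{t,s}-1)$, and with $\mathtt{Tr}(\mv{S}_{\bot})\leq P_t$ the standard MIMO capacity bound gives growth $\min(M_{r,s},M_{t,s}-1)\log_2 P_t + O(1)$. Combining, $\limsup_{P_t\to\infty} R_{\rm opt}/\log_2 P_t \leq \min(M_{r,s},M_{t,s}-1)$, and with the lower bound the limit equals $\min(M_{r,s},M_{t,s}-1)$.

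An alternative, perhaps cleaner, route for the upper bound is to drop the interference constraint's quantitative form and instead argue directly that any feasible $\mv{S}$ can be perturbed into $\mv{S}_{\bot}$ (its projection onto $\mathrm{null}(\mv{g}^{\dag})$) at the cost of only an $O(1)$ loss in the objective as $P_t\to\infty$: the deleted directions carry bounded power (bounded by $\gamma/\|\mv{g}\|^2$ essentially, after accounting for the fact that the $\hat{\mv g}$-aligned energy of any feasible $\mv S$ is at most $\gamma/\|\mv g\|^2$), and removing bounded-power directions from a log-det changes it by $O(1)$. Then $R_{\rm opt}\leq \max_{\mv{S}_{\bot}\succeq0,\,\mathtt{Tr}(\mv{S}_{\bot})\leq P_t}\log_2|\mv{I}+\mv{H}\mv{S}_{\bot}\mv{H}^{\dag}|+O(1)$, and the right-hand maximum is exactly a rank-$(M_{t,s}-1)$-constrained MIMO capacity whose pre-log is $\min(M_{r,s},M_{t,s}-1)$.

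The main obstacle I anticipate is making the splitting $\mv{S}=\mv{S}_{\bot}+\mv{S}_g$ rigorous: $\mv{S}_g$ is not positive semidefinite in general (it contains the cross terms $\hat{\mv g}\hat{\mv g}^{\dag}\mv S(\mv I - \hat{\mv g}\hat{\mv g}^{\dag})$ and its adjoint), so the clean subadditivity of $\log\det$ does not apply verbatim. The fix is to use a genuine orthogonal decomposition — either bound $\mv S\preceq \mv S_{\bot} + c\,\hat{\mv g}\hat{\mv g}^{\dag}$ for an appropriate constant $c$ depending on $P_t$ only through a Cauchy--Schwarz cross-term that is $o(P_t)$, or diagonalize $\mv S$ and track the single eigenvector most aligned with $\hat{\mv g}$ — and then invoke the monotonicity $\mv{A}\preceq\mv{B}\Rightarrow\log_2|\mv{I}+\mv{H}\mv{A}\mv{H}^{\dag}|\leq\log_2|\mv{I}+\mv{H}\mv{B}\mv{H}^{\dag}|$. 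Once that rank-one ``leak'' is isolated, showing it contributes pre-log at most $1$ (hence does not raise the dominant $\min(M_{r,s},M_{t,s}-1)$ term unless $M_{r,s}<M_{t,s}-1$ would have allowed room, which it does not when $M_{r,s}\geq M_{t,s}-1$ — and when $M_{r,s}<M_{t,s}-1$ the receiver dimension is the bottleneck anyway) is routine.
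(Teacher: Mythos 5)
Your strategy is the same as the paper's: lower-bound $R_{\rm opt}$ by $R_{\rm P-SVD}$, and upper-bound it by isolating the $\hat{\mv{g}}$-direction, whose transmit power is pinned to at most $\gamma/\|\mv{g}\|^2$ by the interference constraint, so that all unbounded power growth is confined to the projected channel of rank $\min(M_{r,s},M_{t,s}-1)$. The paper executes this by splitting the \emph{channel}, $\mv{H}=\mv{H}_{\bot}+\mv{H}\hat{\mv{g}}\hat{\mv{g}}^{\dag}$, and applying $(\mv{A}+\mv{B})\mv{S}(\mv{A}+\mv{B})^{\dag}\preceq 2(\mv{A}\mv{S}\mv{A}^{\dag}+\mv{B}\mv{S}\mv{B}^{\dag})$ together with $\log|\mv{I}+\mv{S}_1+\mv{S}_2|\leq\log|\mv{I}+\mv{S}_1|+\log|\mv{I}+\mv{S}_2|$; the factor of $2$ costs only $O(1)$ and, crucially, eliminates the cross terms that worry you, leaving a leak term $2\left(\hat{\mv{g}}^{\dag}\mv{S}\hat{\mv{g}}\right)\mv{H}\hat{\mv{g}}\hat{\mv{g}}^{\dag}\mv{H}^{\dag}$ whose coefficient is at most $2\gamma/\|\mv{g}\|^2$, independent of $P_t$, hence an $O(1)$ contribution; the projected piece is then bounded by $R_{\rm P-SVD}+M_{r,s}+O(1)$. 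This is essentially your second (``alternative'') route and the second variant of your proposed fix (note that $\mv{H}\mv{S}_{\bot}\mv{H}^{\dag}=\mv{H}_{\bot}\mv{S}\mv{H}_{\bot}^{\dag}$, so the two decompositions coincide). Be aware, however, that your first route as literally written does not close: a leak with ``pre-log at most $1$'' would only give an upper bound with pre-log $\min(M_{r,s},M_{t,s}-1)+1$, and a constant $c$ that is allowed to grow with $P_t$ through an $o(P_t)$ Cauchy--Schwarz cross term (e.g., $c=\Theta(\sqrt{P_t})$) would still contribute $\tfrac{1}{2}\log_2 P_t$ and spoil the match. The leak must be shown to be $O(1)$ --- pre-log exactly zero --- which is what the factor-of-$2$ (equivalently, projector-based) decomposition delivers; your parenthetical argument about a pre-log-$1$ leak ``not raising the dominant term'' is both unnecessary and incorrect as stated.
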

\begin{proof}
Please refer to Appendix \ref{appendix:proof of lemma 3}.
\end{proof}
In contrast, the D-SVD may suffer a significant rate loss when
$P_t\rightarrow \infty$. This can be observed from (\ref{eq:Ic power
SVD}). Suppose $\alpha_i>0, \forall i$, the total transmit power $P$
is upper-bounded as $P=\sum_i \sigma_i\leq \frac{\gamma}{\min_i
\alpha_i}$, regardless of the secondary user's own transmit power
constraint $P_t$. As a result, the achievable rate of the D-SVD
eventually gets saturated as $P_t\rightarrow \infty$, and the
corresponding spatial multiplexing gain can be easily shown equal to
zero. The above results are summarized in the following theorem:
\begin{theorem}\label{theorem:high P}
As $P_t \rightarrow \infty$,  $\frac{R_{\rm
opt}}{\log_2P_t}=\frac{R_{\rm
P-SVD}}{\log_2P_t}=\min(M_{r,s},M_{t,s}-1)$, and $\frac{R_{\rm
D-SVD}}{\log_2P_t}=0$ if $\alpha_i>0, i=1,\ldots,M_s$.
\end{theorem}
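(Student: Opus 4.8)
The plan is to assemble the three claimed limits from pieces that are almost entirely in hand already, and to supply the one genuinely new ingredient, namely the saturation of $R_{\rm D-SVD}$. First, the statement $\frac{R_{\rm opt}}{\log_2 P_t} = \min(M_{r,s}, M_{t,s}-1)$ as $P_t \to \infty$ is exactly Lemma \ref{lemma:multiplexing gain}, so I would simply invoke it. Second, for the P-SVD term I would argue as sketched in the paragraph preceding the theorem: by construction (equation (\ref{eq:P-SVD channel})) the P-SVD transmits over $M^{\bot}_s = \min(M_{r,s}, M_{t,s}-1)$ parallel sub-channels with fixed positive gains $\lambda^{\bot}_i$, subject only to the transmit-power constraint $\sum_i \sigma_i = P_t$ (the interference constraint being automatically inactive since $\mv{g}\mv{S}\mv{g}^{\dag}=0$). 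Water-filling over these sub-channels gives $R_{\rm P-SVD} = \sum_{i=1}^{M^{\bot}_s} \log_2(1 + \lambda^{\bot}_i \sigma_i)$ with $\sigma_i \sim P_t / M^{\bot}_s$ for all $i$ once $P_t$ exceeds the level at which all sub-channels are active; dividing by $\log_2 P_t$ and letting $P_t \to \infty$ yields $M^{\bot}_s$. I would also note that $R_{\rm P-SVD} \le R_{\rm opt}$ always (since P-SVD is a feasible choice of $\mv{S}$ for Problem {\bf P2}) and that $R_{\rm opt}$ cannot have pre-log exceeding $M^{\bot}_s$ by Lemma \ref{lemma:multiplexing gain}; together these pin down the P-SVD limit without re-deriving the upper bound.

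Third, and this is the only step needing real work, I would establish $\frac{R_{\rm D-SVD}}{\log_2 P_t} = 0$ under the hypothesis $\alpha_i > 0$ for $i = 1, \ldots, M_s$. Here $\alpha_i = \|\mv{g}\mv{u}_i\|^2$ are the projections of the interference channel onto the right singular vectors of $\mv{H}$. The key observation is the one already flagged in the text right before the theorem: the interference-power constraint (\ref{eq:Ic power SVD}) reads $\sum_{i=1}^{M_s} \alpha_i \sigma_i \le \gamma$, and since every $\alpha_i \ge \min_i \alpha_i =: \alpha_{\min} > 0$, any feasible power allocation satisfies $P = \sum_i \sigma_i \le \gamma / \alpha_{\min}$, a bound independent of $P_t$. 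Hence $R_{\rm D-SVD} = \sum_{i=1}^{M_s} \log_2(1 + \lambda_i \sigma_i) \le \sum_{i=1}^{M_s} \log_2(1 + \lambda_i \gamma / \alpha_{\min}) =: C_0 < \infty$, so $R_{\rm D-SVD}$ is bounded above by a constant $C_0$ that does not depend on $P_t$. Dividing by $\log_2 P_t$ and taking $P_t \to \infty$ gives the limit $0$.

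I do not anticipate a serious obstacle: each ingredient is either cited (Lemma \ref{lemma:multiplexing gain}), already spelled out in the running text, or a one-line boundedness argument. The one place requiring a little care is the P-SVD term, where I want the limit (not merely a lower bound): I would make explicit that for $P_t$ large enough every $\sigma_i$ is strictly positive, so exactly $M^{\bot}_s$ sub-channels carry power and the standard high-SNR expansion $\log_2(1 + \lambda^{\bot}_i P_t / M^{\bot}_s) = \log_2 P_t + O(1)$ applies term by term, giving $R_{\rm P-SVD} = M^{\bot}_s \log_2 P_t + O(1)$ and hence the stated ratio in the limit. A minor subtlety worth a remark is that if some $\alpha_i = 0$ (excluded by hypothesis) the D-SVD conclusion fails, because power could be loaded without limit onto the corresponding interference-free direction; stating the theorem with the hypothesis $\alpha_i > 0$ is therefore exactly what the argument needs.
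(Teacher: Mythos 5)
Your proposal is correct and follows essentially the same route as the paper: the $R_{\rm opt}$ limit is taken from Lemma \ref{lemma:multiplexing gain}, the P-SVD pre-log of $M^{\bot}_s$ is read off from the equivalent diagonalized channel in (\ref{eq:P-SVD channel}) with water-filling under the transmit-power constraint alone, and the D-SVD saturation follows from the bound $P=\sum_i\sigma_i\leq\gamma/\min_i\alpha_i$ forced by (\ref{eq:Ic power SVD}) when all $\alpha_i>0$. The paper assembles exactly these three ingredients in the text preceding the theorem, so no further comparison is needed.
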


\section{Multiple Primary Receivers}\label{sec:multiple primary receivers}

In this section, the general case of multiple primary receivers each
having single or multiple receive antennas as shown in Fig.
\ref{fig:CR system} is studied. Hence, Problem {\bf P1} in
(\ref{eq:capacity P1})-(\ref{eq:S P1}) is considered.

\subsection{MISO Secondary User's Channel}
First, we consider the special case where the secondary user's
channel is MISO. Like Lemma \ref{lemma:MISO 1} in the case of one
single-antenna primary receiver, it can be shown that the optimal
transmit covariance matrix $\mv{S}$ in the case of multiple primary
receivers/antennas still remains as a rank-one matrix when the
secondary user's channel is MISO, i.e., beamforming is optimal.
Hence, it follows that $\mv{S}=\mv{v}\mv{v}^{\dag}$. However, a
closed-form solution for the optimal beamforming vector $\mv{v}$
seems unlikely when the total number of antennas from all primary
receivers $M_{r,p}=\sum_{k=1}^{M_k}$ is greater than one.
Nevertheless, the optimal rank-one $\mv{S}$ for Problem {\bf P1} can
still be obtained using numerical optimization techniques like the
interior-point method. Alternatively, the two SVD-based algorithms
can also be modified to obtain suboptimal $\mv{v}$, as will be
illustrated later in Section \ref{subsec:MIMO case multiple primary
receivers} when the more general case of MIMO secondary user's
channel is addressed.

In this subsection, we first consider Problem {\bf P1} by
substituting $\mv{S}=\mv{v}\mv{v}^{\dag}$ into the problem. Then,
based on this reformulated problem, we present alternative numerical
optimization techniques that, in most cases, can be more efficient
in terms of computational complexity than the interior-point method.
The reformulated problem of Problem {\bf P1} when the secondary
user's channel is MISO, i.e., $\mv{H}\equiv\mv{h}$, can be expressed
as ({\bf P4})
\begin{eqnarray}
\mathtt{Maximize}&& \left\|\mv{h}\mv{v}\right\|^2 \label{eq:capacity
P4}
\\ \mathtt {Subject \ to} && \|\mv{v}\|^2\leq P_t, \label{eq:Tx power P4}\\  && \left\|\mv{G}_k\mv{v}\right\|^2\leq
\Gamma_k, \ k=1,\ldots,K. \label{eq:Ic power P4}
\end{eqnarray}
Notice that for the above problem although the two constraints
specify a convex set for $\mv{v}$, the objective function is
non-concave of $\mv{v}$ and hence renders Problem {\bf P4}
non-convex in its direct form. However, an interesting observation
here is that if $\mv{v}$ satisfies the two constraints (\ref{eq:Tx
power P4}) and (\ref{eq:Ic power P4}), so does $e^{j\theta}\mv{v}$
for any arbitrary $\theta$, and the value of the objective function
is maintained. Thus, without loss of generality we can assume that
$\mv{h}\mv{v}$ is a real number. Using this assumption, Problem {\bf
P4} can be rewritten as
\begin{eqnarray}
\mathtt{Maximize}&& \mathtt{Re}(\mv{h}\mv{v})
\\ \mathtt {Subject \ to} && \mathtt{Im}(\mv{h}\mv{v})=0, \\ && \|\mv{v}\|^2\leq P_t,\\  && \left\|\mv{G}_k\mv{v}\right\|^2\leq
\Gamma_k, \ k=1,\ldots,K.
\end{eqnarray}
It can be shown that the above problem can be cast as a second-order
cone programming (SOCP) \cite{Boydbook}, which can be solved by
standard numerical optimization software.

Since Problem {\bf P4} is indeed convex, and it can be further
verified that the Slater's condition \cite{Boydbook} that requires
that the original (primal) problem needs to have a non-empty
interior of the feasible set is satisfied for this problem, it
follows that the strong duality holds for the problem at hand, which
ensures that the duality gap between the original problem and its
Lagrange dual problem is zero, i.e., solving the Lagrange dual
problem is equivalent to solving the original problem. Motivated by
this fact, as follows we will present an alternative solution for
Problem {\bf P4} by considering its Lagrange dual problem. First,
the Lagrangian \cite{Boydbook} of Problem {\bf P4} can be expressed
as
\begin{eqnarray}\label{eq:Lagrange MISO}
\mathcal{L}(\mv{v},\nu,\{\mu_k\})=\mv{v}^{\dag}\mv{h}^{\dag}\mv{h}\mv{v}-\nu\left(\mv{v}^{\dag}\mv{v}-P_t\right)
-\sum_{k=1}^K\mu_k\left(\mv{v}^{\dag}\mv{G}_k^{\dag}\mv{G}_k\mv{v}-\Gamma_k\right),
\end{eqnarray}
where $\nu$ and $\mu_k$, $k=1,\ldots,K$, are non-negative Lagrange
multiplier (dual variable) associated with the transmit-power
constraint in (\ref{eq:Tx power P4}), and each interference-power
constraint in (\ref{eq:Ic power P4}), respectively. Let
$\mathcal{D}_{v}$ be the set specified by the constraints
(\ref{eq:Tx power P4}) and (\ref{eq:Ic power P4}), the Lagrange dual
function \cite{Boydbook} can be then defined as
\begin{eqnarray}\label{eq:Lagrange dual MISO}
f(\nu,\{\mu_k\})=\max_{\mv{v}\in\mathcal{D}_{v}}
\mathcal{L}(\mv{v},\nu,\{\mu_k\}).
\end{eqnarray}
The Lagrangian dual problem \cite{Boydbook} is then defined as
\begin{equation}\label{eq:dual problem MISO}
\min_{\nu\geq 0,\mu_k\geq 0, \forall k}f(\nu,\{\mu_k\}).
\end{equation}
Notice that the Lagrangian in (\ref{eq:Lagrange MISO}) can be
rewritten as
\begin{eqnarray}\label{eq:Lagrange MISO 2}
\mv{v}^{\dag}\left(\mv{h}^{\dag}\mv{h}-\nu\mv{I}-\sum_{k=1}^K\mu_k\mv{G}_k^{\dag}\mv{G}_k\right)\mv{v}
+\nu P_t+\sum_{k=1}^K\mu_k\Gamma_k.
\end{eqnarray}
From the above expression, it follows that for solving the
maximization problem so as to obtain the dual function in
(\ref{eq:Lagrange dual MISO}), the dual variables $\nu$ and
$\{\mu_k\}$ must satisfy
\begin{equation}\label{eq:dual condition}
\mv{h}^{\dag}\mv{h}-\nu\mv{I}-\sum_{k=1}^K\mu_k\mv{G}_k^{\dag}\mv{G}_k\preceq
0,
\end{equation}
because otherwise, the optimal $\mv{v}$ that maximizes the
Lagrangian in (\ref{eq:Lagrange MISO 2}) will go unbounded to
infinity, which contradicts the facts that the optimal value of the
primal problem is bounded, and the duality gap is zero. Using
(\ref{eq:dual condition}), the dual problem in (\ref{eq:dual problem
MISO}) can be expressed as ({\bf P5})
\begin{eqnarray}
\mathtt{Minimize}&& \nu P_t+\sum_{k=1}^K\mu_k\Gamma_k
\\ \mathtt {Subject \ to} && -\mv{h}^{\dag}\mv{h}+\nu\mv{I}+\sum_{k=1}^K\mu_k\mv{G}_k^{\dag}\mv{G}_k\succeq 0, \\ && \nu\geq 0, \\
&&  \mu_k\geq 0, \ k=1,\ldots,K.
\end{eqnarray}
The above problem can be efficiently solved using the standard
semi-definite programming (SDP) software \cite{Boydbook}. From the
dual optimal solutions, the primal optimal solution for $\mv{v}$ can
be obtained as any vector (may not be unique) that satisfies
$\left(\mv{h}^{\dag}\mv{h}-\nu\mv{I}-\sum_{k=1}^K\mu_k\mv{G}_k^{\dag}\mv{G}_k\right)\mv{v}=0$.

The computational complexity of three different methods for solving
Problem {\bf P1}, namely, the interior-point method, the SOCP for
{\bf P4}, and the SDP for {\bf P5}, in the case of MISO secondary
user's channel are compared as follows. All three methods have the
similar order of complexity in terms of the number of (real)
variables in each corresponding problem. For the interior-point
method, the unknown $\mv{S}$ has $M_{t,s}$ real variables and
$\frac{(M_{t,s}-1)M_{t,s}}{2}$ complex variables, results in a total
number of real variables equal to $M_{t,s}^2$. The SOCP has the
unknown $\mv{v}$ that has $2M_{t,s}$ real variables. The SDP has the
unknown dual variables $\nu$ and $\{\mu_k\}$ that in total
contribute to $K+1$ real variables. Consequently, the SOCP has a
much lower computational complexity than the interior-point method.
So does the SDP than the SOCP when, e.g., $K\ll M_{t,s}$. At last,
it is noted that the SOCP and SDP have also been applied for design
of downlink precoding for MIMO-BC channels (e.g.,
\cite{Bengtsson99}-\cite{Wiessel04}).

\subsection{MIMO Secondary User's Channel}\label{subsec:MIMO case multiple primary receivers}

When the secondary user's channel is MIMO, in general we have to
resort to the interior-point method for solving Problem {\bf P1}.
Alternatively, the two SVD-based solutions developed in Section
\ref{sec:single primary receiver} in the case of one single-antenna
primary receiver can also be modified to incorporate more than one
primary receivers/antennas, as will be shown in this subsection.

\subsubsection{Direct-Channel SVD (D-SVD)}
For the D-SVD, the precoding matrix $\mv{V}$ remains to be $\mv{U}$
obtained from the SVD of $\mv{H}$ regardless of the number of
primary receivers or antennas. However, the power allocation
$\mv{\Sigma}$ needs to be adjusted in order to incorporate multiple
interference-power constraints in (\ref{eq:total power ct}). By
introducing a set of non-negative Lagrange multipliers,
$\mu_1,\ldots, \mu_K$, each associated with one of $K$
interference-power constraints in (\ref{eq:total power ct}). Like
Problem {\bf P3} in the case of one single-antenna primary receiver,
the optimal power assignments $\sigma_i$'s in the case of multiple
primary receivers/antennas can also be obtained as the {\it
multi-level} WF solutions:
\begin{equation}
\sigma_i=\left(\frac{1}{\nu+\sum_{k=1}^{K}\sum_{j=1}^{M_k}\alpha_{i,k,j}\mu_k}-\frac{1}{\lambda_i}\right)^+,
\ i=1,\ldots,M_s,
\end{equation}
where $\alpha_{i,k,j}=||\mv{g}_{k,j}\mv{u}_i||^2$. Algorithm {\bf
A1} also needs to be modified so as to iteratively search for the
optimal $\nu$ and $\mu_k$'s. Instead of updating a single $\mu$ by
the bisection method in Algorithm {\bf A1}, $\mu_k$'s can be
simultaneously updated by the ellipsoid method \cite{BGT81} by
observing that
$\gamma_k-\sum_{i=1}^{M_s}\sum_{j=1}^{M_k}\alpha_{i,j,k}\sigma_i$ is
a {\it sub-gradient} of $\mu_k, k=1,\ldots,K$.

\subsubsection{Projected-Channel SVD (P-SVD)} \label{subsubsection:MU
P-SVD} Let $\mv{G} \in \mathbb{C}^{M_{r,p}\times M_{t,s}}$ denote
the channel from the secondary transmitter to all primary
receivers/antennas by taking each $\mv{g}_{k,j}$ as the
$(\sum_{k'=1}^{k}M_{k'-1}+j)$-th row of $\mv{G}$, where $M_0=0$.
Alternatively, $\mv{G}=[\mv{G}_1^T,\ldots,\mv{G}_K^T]^T$. Let the
SVD of $\mv{G}$ be expressed as
$\mv{G}=\mv{Q}_{G}\mv{\Lambda}_{G}^{1/2}\mv{U}_{G}^{\dag}$. In order
to avoid completely the interference at all primary
receivers/antennas, the P-SVD obtains the precoding matrix $\mv{V}$
from the SVD of $\mv{H}_{\bot}$, which is the projection of $\mv{H}$
into the null space of $\mv{G}^{\dag}$, i.e.,
\begin{equation}\label{eq:channel projection multiple primary
receivers}
\mv{H}_{\bot}=\mv{H}(\mv{I}-\mv{U}_G\mv{U}_G^{\dag}).
\end{equation}
Notice that the above matrix projection is non-trivial only when
$M_{t,s}> M_{r,p}$, because otherwise a null matrix $\mv{H}_{\bot}$
is resulted. Let the SVD of $\mv{H}_{\bot}$ be expressed as
$\mv{H}_{\bot}=\mv{Q}_{\bot}(\mv{\Lambda}_{\bot})^{1/2}(\mv{U}_{\bot})^{\dag}$.
Like the case of a single-antenna primary receiver, it can be
verified that the secondary user's MIMO channel can be decomposed
into $\min(M_{r,s}, M_{t,s}-M_{r,p})$ sub-channels with channel
gains given by the diagonal elements of $\mv{\Lambda}_{\bot}$.

\subsubsection{Hybrid D-SVD/P-SVD} \label{subsubsection:Hybrid SVD}

A new hybrid D-SVD/P-SVD algorithm is proposed here in the case of
multiple primary receivers/antennas, i.e., $M_{r,p}> 1$. In this
algorithm, the secondary user's MIMO channel is first projected into
the null space of some selected subspace of $\mv{G}^{\dag}$ as
opposed to the whole space spanned by $\mv{G}^{\dag}$ in the P-SVD.
Then, the D-SVD algorithm is applied to this projected channel to
obtain transmit precoding matrix as well as power allocations that
satisfy the interference-power constraints at the primary receivers.
There are a couple of reasons for using this hybrid D-SVD/P-SVD
algorithm. First, it works even if $M_{t,s}\leq M_{r,p}$ under which
the P-SVD is not implementable. Secondly, in the case of a
single-antenna primary receiver, i.e., $M_{r,p}=1$, in Theorem
\ref{theorem:low P} and Theorem \ref{theorem:high P}, it has be
shown that the D-SVD and P-SVD are asymptotically optimal as
$P_t\rightarrow 0$ and $P_t\rightarrow \infty$, respectively. It is
thus conjectured that when $M_{r,p}> 1$, the hybrid algorithm is
likely to perform superior than both the D-SVD and P-SVD for some
moderate values of $P_t$ (as will be verified later by the
simulation results in Section \ref{sec:simulation results}).

One important issue to address for the hybrid D-SVD/P-SVD algorithm
is on the selection for the projected subspace of $\mv{G}^{\dag}$.
Although the optimal selection rule remains unknown, in this paper a
heuristic rule is proposed as follows. First, each channel
$\mv{g}_{k,j}$ in $\mv{G}$ is normalized by the corresponding
$\sqrt{\Gamma_k}$ so as to make the equivalent interference-power
constraints at all primary receivers equal to unity. Denote this
normalized $\mv{G}$ as $\hat{\mv{G}}$. Next, let the SVD of
$\hat{\mv{G}}$ be expressed as
$\hat{\mv{G}}=\mv{Q}_{\hat{G}}\mv{\Lambda}_{\hat{G}}^{1/2}\mv{U}_{\hat{G}}^{\dag}$,
where the singular values in $\mv{\Lambda}_{\hat{G}}^{1/2}$ are
ordered by $\lambda_{\hat{G},1}\geq \ldots \geq \lambda_{\hat{G},
M_{\hat{G}}}, M_{\hat{G}}=\min(M_{t,s},M_{r,p})$. The hybrid
algorithm then projects $\mv{H}$ into the null space of the space
spanned by the first $b$ , $b\leq M_{\hat{G}}$, column vectors of
$\mv{U}_{\hat{G}}$, denoted by $\mv{U}_{\hat{G}}(b)$, corresponding
to the first $b$ largest singular values of $\hat{\mv{G}}$, i.e.,
\begin{equation}\label{eq:channel projection hybrid}
\mv{H}_{\bot}(b)=\mv{H}\left(\mv{I}-\mv{U}_{\hat{G}}(b)\left(\mv{U}_{\hat{G}}(b)\right)^{\dag}\right).
\end{equation}
After applying the precoding matrix $\mv{V}$ based on the SVD of
$\mv{H}_{\bot}(b)$, it can be verified that the secondary user's
MIMO channel is diagonalized and decomposed into
$\min(M_{r,s},M_{t,s}-b)$ sub-channels. Notice that if
$b<M_{\hat{G}}$, there may be remaining interference at each primary
receiver and, hence, the secondary transmitter power allocation
$\mv{\Sigma}$ needs to be adjusted to make the interference-power
constraints being satisfied at all primary receivers. This can be
done by solving a similar problem like Problem {\bf P3} in the D-SVD
case.

The rational for the hybrid D-SVD/P-SVD algorithm lies in that only
some selected subspace of $\hat{\mv{G}}^{\dag}$ at the primary
receivers is removed by the secondary transmit precoding, while the
remaining subspace of $\hat{\mv{G}}^{\dag}$ is preserved for the
secondary transmission. Notice that the nulled subspace on the
average contributes to the most amount of interference powers at the
primary receivers because it corresponds to the first $b$ largest
singular values of $\hat{\mv{G}}$. For typical wireless channels,
the channel from the secondary transmitter to each primary receive
antenna (represented by the corresponding row vector of $\mv{G}$) is
usually subject to variation or fading over time. Furthermore, these
channels may exhibit quite different average channel gains (e.g., a
near-far situation), and some of them may have certain correlations
(e.g., for the receive antennas at the same primary receiver or for
the primary receives located in the same vicinity). All these
factors can result in a more spread distribution for the singular
values in $\mv{\Lambda}_{\hat{G}}^{1/2}$, which thereby makes the
subspace projection more effective in removing the interferences at
the primary receivers. Because $\mv{G}$ is usually independent of
the secondary MIMO channel $\mv{H}$, it is unlikely that the nulled
subspace of $\hat{\mv{G}}^{\dag}$ are strongly correlated with
$\mv{H}$, a phenomenon termed {\it interference diversity}.
Therefore, by selecting a proper value of $b$, the secondary user
has a more flexibility in balancing between spacial multiplexing for
the secondary transmission and interference avoidance at the primary
receivers. Notice that the interference-diversity effect is in
principle very similar to the {\it multiuser diversity} effect
previously reported in the literature (e.g.,
\cite{Knopp95}-\cite{Tse02}), which is exploited for multiuser rate
scheduling in a mobile wireless network.

\section{Multi-Channel Transmission}\label{sec:multi-channel}

So far, we have considered the single-channel transmission for both
primary and secondary users, and shown that even when some primary
users are active for transmission, the secondary user is still able
to achieve opportunistic spectrum sharing with active primary users
by utilizing multiple transmit antennas and properly designing its
transmit spatial spectrum. In this section, a more general
multi-channel transmission is studied where both primary and
secondary users transmit over parallel single-channels. This
scenario is applicable when, e.g., both primary and secondary users
transmit over multi-tone or
orthogonal-frequency-division-multiplexing (OFDM) channels, or
alternatively, over consecutive block-fading channels. For such
scenarios, in order to achieve optimum spectrum sharing, the
secondary user needs to first detect the primary user's transmission
activities in all available dimensions of space, time and frequency
and then adapts its transmit resources such as power, rate, and
spatial spectrum at all of these dimensions. For convenience, in
this section we consider the case of one single-antenna primary
receiver in the CR network although the generalization to multiple
primary receivers with single or multiple antennas can also be done
similarly like in Section \ref{sec:multiple primary receivers}.

Let $\mv{H}_j$, $\mv{g}_j$ and $\mv{S}_j$ be defined the same as in
Section \ref{sec:single primary receiver} but now corresponding to
one particular sub-channel $j, j=1,\ldots,N$, where $N$ is the total
number of sub-channels for the multi-band transmission. Each
sub-channel can be considered as, e.g., one OFDM tone in frequency
domain or one fading block in time domain. It is assumed that the
interference-power constraint $\gamma$ is identical at all
sub-channels. Problem {\bf P2} can be then extended to the
multi-channel transmission as ({\bf P6})
\begin{eqnarray}
\mathtt{Maximize}&&
\sum_{j=1}^N\log_2\left|\mv{I}+\mv{H}_j\mv{S}_j\mv{H}_j^{\dag}\right|
\label{eq:capacity OFDM}
\\ \mathtt {Subject \ to} && \sum_{j=1}^{N}\mathtt{Tr}(\mv{S}_j)\leq P_t, \label{eq:Tx power OFDM}\\
&& \mv{g}_j\mv{S}_j\mv{g}_j^{\dag}\leq \gamma, \forall j,
\label{eq:Ic power OFDM} \\ && \mv{S}_j\succeq 0, \forall j.
\label{eq:S OFDM}
\end{eqnarray}
This problem maximizes the total transmit rate over all of $N$
sub-channels for the secondary transmission under a total
transmit-power constraint and a set of interference-power
constraints each for one of $N$ sub-channels. Like Problem {\bf P2},
the above problem can be shown to be convex and, hence, can be
solved using standard convex optimization techniques. In the
following, the {\it Lagrange dual-decomposition method} is applied
to decompose Problem {\bf P6} into $N$ subproblems all having an
identical structure. The main advantage of this method lies in that
if the same computational routine can be simultaneously applied to
all $N$ subproblems, the overall computational time is maintained
regardless of $N$. Other applications of the Lagrange
dual-decomposition method for resource allocation in communication
systems can be found in, e.g., \cite{Yu03}-\cite{Chen07}.

For the problem at hand, the first step is to introduce the
non-negative Lagrange multiplier $\nu$ associated with the power
constraint in (\ref{eq:Tx power OFDM}) and to write the Lagrangian
of the original (primal) problem as
\begin{eqnarray}\label{eq:Lagrange power min}
\mathcal{L}(\{\mv{S}_j\},\nu)=\sum_{j=1}^N\log_2\left|\mv{I}+\mv{H}_j\mv{S}_j\mv{H}_j^{\dag}\right|-
\nu\left(\sum_{j=1}^{N}\mathtt{Tr}(\mv{S}_j)-P_t\right).
\end{eqnarray}
Let $\mathcal{D}_j$ be the set specified by the remaining
constraints in (\ref{eq:Ic power OFDM}) and (\ref{eq:S OFDM})
corresponding to sub-channel $j, j=1,\ldots,N$. The Lagrange dual
function is then defined as
\begin{eqnarray}\label{eq:Lagrange dual power min}
f(\nu)=\max_{\mv{S}_j\in\mathcal{D}_j, \forall j}
\mathcal{L}(\{\mv{S}_j\},\nu).
\end{eqnarray}
The dual problem is then defined as $\min_{\nu\geq 0}f(\nu)$. It can
be verified that the strong duality holds for the problem at hand,
and hence the duality gap is zero. Therefore, the primal problem can
be solved by first maximizing the Lagrangian $\mathcal{L}$ to obtain
the dual function $f(\nu)$, and then minimizing $f(\nu)$ over all
non-negative values of $\nu$.

Consider first the problem for obtaining $f(\nu)$ with some given
$\nu$. It is interesting to observe that $f(\nu)$ can be also
written as
\begin{eqnarray}\label{eq:Lagrange dual rewrite}
f(\nu)= \sum_{j=1}^{N}f'_j(\nu)+\nu P_t,
\end{eqnarray}
where
\begin{eqnarray}\label{eq:Lagrange dual redefine}
f'_j(\nu)=\max_{\mv{S}_j\in\mathcal{D}_j}\log_2\left|\mv{I}+\mv{H}_j\mv{S}_j\mv{H}_j^{\dag}\right|-
\nu\mathtt{Tr}(\mv{S}_j), \ j=1,\ldots,N.
\end{eqnarray}
Hence, $f(\nu)$ can be obtained by solving $N$ independent
subproblems, each for $f'_j(\nu)$, $j=1,\ldots,N$. Each subproblem
has the same structure and hence can be solved by using the same
computational routine. This practice is usually referred to as the
{\it dual decomposition}. From (\ref{eq:Lagrange dual redefine}),
each of these subproblems can be written more explicitly as ({\bf
P7})
\begin{eqnarray}
\mathtt{Maximize}&&
\log_2\left|\mv{I}+\mv{H}_j\mv{S}_j\mv{H}^{\dag}_j\right|-\nu\mathtt{Tr}(\mv{S}_j)
\\ \mathtt {Subject \ to} && \mv{g}_j\mv{S}_j\mv{g}^{\dag}_j\leq
\gamma, \\ && \mv{S}_j\succeq 0.
\end{eqnarray}
Compared with Problem {\bf P2}, the above problem has the difference
in that the transmit-power constraint in (\ref{eq:Tx power}) is
removed and a new term $\nu\mathtt{Tr}(\mv{S}_j)$ is subtracted in
the objective function. This is a consequence of a total-power
constraint over all $N$ sub-channels in the multi-channel case
instead of the per-sub-channel-based power constraint in the
single-channel case. It can be verified that Problem {\bf P7} is
also convex and hence can be solved by using, e.g., the SOCP or SDP
if the secondary user's channel is MISO (beamforming is optimal), or
the interior-point method if the secondary user's channel is MIMO
(spatial multiplexing is optimal).

The dual variable $\nu$ can be considered as a common price applied
to all sub-channels for regulating their allocated transmit powers.
It can be verified that in Problem {\bf P7}, for each sub-channel, a
larger $\nu$ will result in a smaller power consumption
$\mathtt{Tr}(\mv{S}_j)$ and vice versa. Hence, the optimal $\nu$ can
be found by a bisection search by comparing the optimal sum-power
over all $N$ sub-channels for a given $\nu$ with the transmit-power
constraint $P_t$. In summary, the following algorithm ({\bf A2}) can
be used to solve Problem {\bf P6}:
\begin{itemize}
\item {\bf Given} $\nu\in [0, \hat{\nu}]$
\item {\bf Initialize} $\nu_{\min}=0$, $\nu_{\max}=\hat{\nu}$
\item {\bf Repeat}
\begin{itemize}
\item[1.] Set $\nu\leftarrow\frac{1}{2}(\nu_{\min}+\nu_{\max})$.
\item[2.] Solve Problem {\bf P7} for $j=1,\ldots,N$ independently to obtain
$\{\mv{S}_j\}$.
\item[3.] If $\sum_{j=1}^{N}\mathtt{Tr}(\mv{S}_j)\leq P_t$, set $\nu_{\max}\leftarrow\nu$;
otherwise, $\nu_{\min}\leftarrow\nu$.
\end{itemize}
\item {\bf Until} $\nu_{\max}-\nu_{\min}\leq\delta_{\nu}$ where $\delta_{\nu}$ is a small positive constant
that controls the algorithm accuracy.
\end{itemize}

Algorithm {\bf A2} can also be applied when the D-SVD or P-SVD
algorithm is used to approximately solve each subproblem {\bf P7}.
The selection of the D-SVD or P-SVD at each sub-channel depends on
the dual variable $\nu$ and the channels $\mv{H}_j$ and $\mv{g}_j$.
Because $\mv{H}_j$ and $\mv{g}_j$ may change over different
sub-channels, in order to exploit this diversity, the best selected
SVD-based algorithm (the D-SVD or P-SVD) may also change from one
sub-channel to another.

\section{Simulation Results}\label{sec:simulation results}
For the simulations, it is assumed that in the CR network both the
channels from the secondary transmitter to the primary receivers
$\mv{G}$, and to the secondary receiver $\mv{H}$, are drawn from the
family of random vectors/matrices for which each element is
generated by independent CSCG variable distributed as
$\mathcal{CN}(0,0.1)$ for $\mv{G}$, and $\mathcal{CN}(0,1)$ for
$\mv{H}$, respectively. The secondary user's transmit-power
constraint $P_t$ sweeps from 1 to 100, which is equivalent to a
range of average SNRs per receive antenna (measured when the actual
transmit power $P$ is equal to $P_t$) from 0 dB to 20 dB. Monte
Carlo simulations with 1000 randomly generated channel-pairs
$(\mv{G},\mv{H})$ are implemented, and the average achievable rates
are plotted versus the SNR values. The results are shown for the
following cases:

\subsection{Capacity With (w/) versus Without (w/o) Interference-Power Constraint}
Fig.\ref{fig:capacity comp SU} compares the capacity for the
secondary user's transmission w/ and w/o the interference-power
constraint at a single-antenna primary receiver. In the case that
the interference-power constraint is applied, $\gamma$ is equal to
0.01. For the secondary receiver, it is assumed that $M_{r,s}=1$,
and two antenna configurations are considered for the secondary
transmitter: (1) Single-input single-output (SISO) case:
$M_{t,s}=1$; (2) MISO case: $M_{t,s}=4$. For both SISO and MISO
secondary user's channels, it is observed that the
interference-power constraint reduces the capacity of the secondary
transmission, especially at the high-SNR regime. However, the
capacity improvement by adding more transmit antennas at the
secondary transmitter is observed to be substantial. This is so
because in the SISO case, as SNR increases, the capacity is
eventually limited by the interference-power constraint instead of
the secondary user's own transmit power constraint; while in the
MISO case, the secondary transmitter is able to apply transmit
beamforming so as to avoid the interference at the primary receiver
and at the same time, guarantee its own rate increase with SNR.
Therefore, in addition to antenna array and diversity gains in the
conventional MISO point-to-point transmission, multi-antennas at the
secondary transmitter play another essential role in a CR network,
i.e., adapting transmit spatial spectrum away from the interfering
direction to the primary receiver, so as to achieve a high spectral
efficiency for the secondary transmission.

\subsection{D-SVD versus P-SVD}
Fig. \ref{fig:SVD comp SU} shows the achievable rates of two
SVD-based algorithms, the D-SVD and P-SVD, and compares them to the
capacity w/ and w/o the interference-power constraint, as well as
the achievable rate by a simple design for the secondary transmit
spatial spectrum, termed {\it white spectrum}, for which equal power
is allocated to all transmit antennas, i.e.,
$\mv{S}=\frac{P}{M_{t,s}}\mv{I}$. It is assumed that there is a
single-antenna primary receiver, $\gamma=0.1$, and
$M_{t,s}=M_{r,s}=2$. It is observed that the achievable rates by the
D-SVD and P-SVD are close to the capacity w/ interference-power
constraint at the low- and high-SNR regime, respectively. At the
high-SNR regime, both the capacity by the optimal $\mv{S}$ obtained
by numerical algorithm and the achievable rate by the P-SVD increase
linearly with $\log_2 SNR$ by a factor of $\min(M_{r,s},
M_{t,s}-1)=\min(2,2-1)=1$ while the achievable rate by the D-SVD
eventually gets saturated, both in accordance with Theorem
\ref{theorem:high P}. Substantial rate improvements by the optimal
as well as the D-SVD/P-SVD solutions over the simple white spectrum
are also observed from low to high SNR values.

\subsection{Hybrid D-SVD/P-SVD}
Fig. \ref{fig:SVD comp MU} compares the achievable rates of the
hybrid D-SVD/P-SVD with different values of $b$. The capacity w/ and
w/o the interference-power constraint are also shown for comparison.
It is assumed that the number of primary receivers $K=2$ each with a
single receive antenna, $\Gamma_k=0.1, k=1,2$, and
$M_{t,s}=M_{r,s}=4$. Because there are two single-antenna primary
receivers, the hybrid algorithm can choose to project the secondary
user's channel $\mv{H}$ into either the whole space of
$\hat{\mv{G}}^{\dag}$ by taking $b=2$ (or the P-SVD) or some
selected subspace of $\hat{\mv{G}}^{\dag}$ with $b=1$, or simply
choose not to project at all with $b=0$ (or the D-SVD). It is
observed that as SNR increases, the optimal SVD-based algorithm
changes from the D-SVD to the hybrid D-SVD/P-SVD with $b=1$, and
finally to the P-SVD. This result is consistent with the conjecture
we have made previously in Section \ref{subsubsection:Hybrid SVD}.
The rational here is that as the transmit power increases, the
secondary user's transmit spatial spectrum also needs to allocate a
larger portion of total transmit power into the projected channel in
order not to violate the prescribed interference-power constraint at
the primary receiver.

Fig. \ref{fig:SVD comp K} compares the achievable rates of the
hybrid D-SVD/P-SVD with the D-SVD and P-SVD as the number of
single-antenna primary receivers $K$ increases from 2 to 10. The
interference-power constraint $\Gamma_k=0.01$ is assumed to be equal
at all primary receivers, and $P_t$ is fixed as 10, and
$M_{t,s}=M_{r,s}=4$. Notice that for each given $K$, $b$ for the
hybrid algorithm can take values from 0 to $\min(M_{t,s},K)$. If
$b=0$, the hybrid algorithm becomes the D-SVD; while if
$b=\min(M_{t,s},K)$, it becomes the P-SVD. In this figure, the
achievable rate shown for the hybrid algorithm for each $K$ is
obtained by taking the maximum rate over all possible values of $b$.
Notice that if $K\geq M_{t,s}$, the achievable rate by the P-SVD
becomes zero because the ZF-based channel projection results in a
null projected channel $\mv{H}_{\bot}$ in (\ref{eq:channel
projection multiple primary receivers}). It is observed that as $K$
increases, the achievable rate by any of these three proposed
algorithms decreases due to more added interference-power
constraints. However, the hybrid algorithm by dynamically selecting
$b$ achieves substantial rate improvements compared with the D-SVD
and P-SVD because it effectively exploits the interference-diversity
gain in the CR network. As $K$ becomes much larger than $M_{t,s}$,
it is observed that the achievable rates by the hybrid algorithm and
the D-SVD both converge to be identical. However, such convergence
is observed to be very slow with respect to $K$.

\subsection{Multi-Channel Transmission}
A multi-tone or OFDM-based broadband transmission system is
considered for a secondary user with $M_{t,s}=M_{r,s}=2$ and a
single-antenna primary receiver. The OFDM channels for both primary
and secondary transmissions are assumed to have $N=64$ tones, and
four equal-energy, independent, and consecutive multi-path delays.
The interference-power constraint at the primary receiver is set to
be $\gamma=0.1$ at all tones. In this case, the secondary user is
able to allocate variable transmit rate, power, and spatial spectrum
at different OFDM tones based on the channels $\mv{H}_j$ and
$\mv{g}_j$, $j=1,\ldots,N$. Fig. \ref{fig:Multitone} shows the
secondary user's capacity w/ and w/o the interference-power
constraint, and the achievable rate of the per-tone-based suboptimal
algorithm that selects the best SVD-based algorithm (the D-SVD or
P-SVD) at each tone. It is observed that by exploiting the
frequency-selective fading, the per-tone-based suboptimal algorithm
achieves smaller gaps from the actual capacity w/ interference-power
constraints compared with the case of single-channel transmission
previously shown in Fig.\ref{fig:SVD comp SU}.

\section{Concluding Remarks and Future Directions} \label{sec:conclusion}
Transmit optimization for a single secondary MIMO/MISO link in a CR
network under constraint of opportunistic spectrum sharing is
considered. The capacity of the secondary link is studied under both
the secondary transmit-power constraint and a set of
interference-power constraints at multiple primary receivers.
Multi-antennas are exploited at the secondary transmitter to
optimally tradeoff between throughput maximization and interference
avoidance. In the case of MISO secondary user's channel, beamforming
is shown to be the optimal strategy for the secondary transmitter.
In the case of MIMO secondary user's channel, the capacity-achieving
transmit spatial spectrum under the interference-power constraint in
general does not diagonalize the MIMO channel and hence requires
sophisticated encoding and decoding methods. Two suboptimal
algorithms based on the SVD of the secondary user's MIMO channel,
namely the D-SVD and the P-SVD, are proposed to tradeoff between
capacity and complexity. In the case of multiple primary
receivers/antennas, a hybrid D-SVD/P-SVD algorithm is also proposed
to exploit the inherent interference diversity gain in a CR network.
The developed algorithms are also extended to the multi-channel
transmission whereby the secondary user is able to employ transmit
adaptations in all space, time and frequency domains so as to
achieve optimum opportunistic spectrum sharing.

An interesting extension of this paper is to consider more
generalized interference constraints at each primary receiver,
especially when it is equipped with multiple receive antennas.
Instead of considering either the per-antenna-based power constraint
or the total-power constraint over all receive antennas, the
interference constraint at each primary receiver can also be more
specifically related to the primary user's channel capacity. More
interestingly, both primary and secondary transmit adaptations can
be jointly optimized under mutual interferences given their unequal
priorities for spectrum utilization, through either a centralized
control or distributed self-adaptations based on the principle of
game theory. Results and algorithms developed in this paper can also
be extended to the case where only statistical knowledge on the
channel, instead of instantaneous channel knowledge as assumed in
this paper, is available at the secondary transmitter for optimum
design of transmit spatial spectrum under some long-term average
interference-power constraints. Furthermore, although this paper
addresses a single pair of secondary transmitter and receiver, it
can also be extended to incorporate multiple secondary users that
jointly share transmit spectrum with the primary users by
considering different models of the secondary network, e.g., the
uplink MIMO multiple-access channel (MAC), the downlink MIMO
broadcast channel (BC), or the distributed MIMO interference channel
(IC).

\appendices
\section{Proof of Theorem \ref{theorem:capacity loss} } \label{appendix:capacity loss}
Consider the $k$-th primary transmission represented by
\begin{equation}
\mv{y}_k(n) = \mv{H}_k\mv{x}_k(n) + \mv{z}_k(n) + \mv{q}_k(n),
\end{equation}
where $\mv{H}_k\in\mathbb{C}^{M_{k} \times N_{k}}$ denotes the
$k$-th primary user's channel, $M_{k}$ and $N_{k}$ are the number of
antennas at its receiver and transmitter, respectively;
$\mv{y}_k(n)$ and $\mv{x}_k(n)$ are the received and transmitted
signal vector, respectively; $\mv{q}_k(n)$ is the additive noise
vector at the receiver that assumed to be distributed as
$\mv{z}_k(n)\sim\mathcal{CN}(0,\phi_k\mv{I})$; $\mv{q}_k(n)$ is the
interference from the secondary transmitter expressed as
\begin{equation}
\mv{q}_k(n) = \mv{G}_k\mv{x}(n).
\end{equation}
It is easy to verify that the covariance matrix for $\mv{q}_k(n)$,
denoted by $\mv{Q}_k=\mathbb{E}[\mv{q}_k(n)\mv{q}_k^{\dag}(n)]$, is
equal to $\mv{G}_k\mv{S}\mv{G}_k^{\dag}$, where $\mv{S}$ is the
transmit covariance matrix for the secondary user. Let the transmit
covariance matrix of the $k$-th primary user be denoted by
$\mv{S}_k$, $\mv{S}_k=\mathbb{E}[\mv{x}_k(n)\mv{x}_k^{\dag}(n)]$.
The capacity of the $k$-th primary transmission (in bits/complex
dimension) can be then expressed as
\begin{equation}
C_1=\log_2\left|\mv{I}+\left(\phi_k\mv{I}+\mv{Q}_k\right)^{-1/2}\mv{H}_k\mv{S}_k\mv{H}_k^{\dag}
\left(\phi_k\mv{I}+\mv{Q}_k\right)^{-1/2}\right|.
\end{equation}
On the other hand, if the secondary transmitter if off, i.e., the
interference from the secondary transmitter $\mv{q}_k(n)$ is
nonexistent, the capacity of the $k$-th primary transmission is
expressed as
\begin{equation}
C_2=\log_2\left|\mv{I}+\frac{1}{\phi_k}\mv{H}_k\mv{S}_k\mv{H}_k^{\dag}\right|.
\end{equation}
The capacity loss of the $k$-th primary transmission due to the
secondary transmission is then equal to $C_2-C_1$. In order to find
an upper-bound for such capacity loss, the following
equalities/ineuqalities are provided for $C_1$:
\begin{eqnarray}
C_1&=&\log_2\left|\mv{I}+\mv{S}^{1/2}_k\mv{H}_k^{\dag}
\left(\phi_k\mv{I}+\mv{Q}_k\right)^{-1}
\mv{H}_k\mv{S}_k^{1/2}\right|, \label{eq:a}\\
&\geq&
\log_2\left|\mv{I}+\frac{1}{\phi_k+\Gamma_k}\mv{S}^{1/2}_k\mv{H}_k^{\dag}
\mv{H}_k\mv{S}_k^{1/2}\right|, \label{eq:b} \\
&=&
\log_2\left|\mv{I}+\frac{1}{\phi_k+\Gamma_k}\mv{H}_k\mv{S}_k\mv{H}_k^{\dag}\right|,
\label{eq:c}\\
&\geq&
\log_2\left|\frac{\phi_k}{\phi_k+\Gamma_k}\left(\mv{I}+\frac{1}{\phi_k}\mv{H}_k\mv{S}_k\mv{H}_k^{\dag}\right)\right|,\\
&=&-\min(M_k,N_k)\log_2\left(1+\frac{\Gamma_k}{\phi_k}\right)+C_2,
\label{eq:d}
\end{eqnarray}
where (\ref{eq:a}) and (\ref{eq:c}) are due to
$\log|\mv{I}+\mv{AB}|=\log|\mv{I}+\mv{BA}|$, (\ref{eq:b}) is from
the facts that $\mv{Q}_k\preceq\Gamma_k\mv{I}$ and
$\log|\mv{I}+\mv{A}\mv{X}_1\mv{A}^{\dag}|\geq
\log|\mv{I}+\mv{A}\mv{X}_2\mv{A}^{\dag}|$, if
$\mv{X}_1\succeq\mv{X}_2\succeq 0$. Using (\ref{eq:d}), the proof is
completed.

\section{Proof of Lemma \ref{lemma:MISO 1}}\label{appendix:proof of
lemma 1}

Because Problem {\bf P2} is convex, the optimal $\mv{S}$ must
satisfy the Karush-Kuhn-Tacker (KKT) conditions \cite{Boydbook}
written at below:
\begin{eqnarray}
\mv{h}^{\dag}\left(1+\mv{h}\mv{S}\mv{h}^{\dag}\right)^{-1}\mv{h}+\mv{\Phi}&=&
\mu\mv{g}^{\dag}\mv{g}+\nu\mv{I}, \label{eq:KKT 1}
\\ \nu\left(\mathtt{Tr}(\mv{S})-P_t\right)&=&0, \label{eq:KKT 2}
\\ \mu\left(\mv{g}\mv{S}\mv{g}^{\dag}- \gamma\right)&=&0,
\label{eq:KKT 3}
\\ \mathtt{Tr}(\mv{\Phi}\mv{S})&=&0, \label{eq:KKT 4}
\end{eqnarray}
where $\nu$, $\mu$, and $\mv{\Phi}$ are the Lagrange multipliers
associated with the constraint (\ref{eq:Tx power}), (\ref{eq:Ic
power}) and (\ref{eq:S}), respectively, $\nu, \mu \geq 0,
\mv{\Phi}\succeq 0$. First, consider the case of $\nu=0$, from
(\ref{eq:KKT 2}), it follows that $P<P_t$, i.e., $P$ is not limited
by the secondary transmit-power constraint, but instead, by the
interference-power constraint. From (\ref{eq:KKT 1}) and because
$\mv{\Phi}\succeq 0$, it follows that $\mv{g}$ must be in parallel
to $\mv{h}$, i.e., $\mv{g}=c\mv{h}$, where $c$ is a constant. In
this case, it can be easily shown that the optimal $\mv{S}$ is
$\mv{S}=\frac{\gamma}{||\mv{g}\mv{h}^{\dag}||^2}\mv{h}^{\dag}\mv{h}$,
i.e., $\mathtt{Rank}(\mv{S})=1$. Next, consider the case of $\nu>0$,
i.e., $P=P_t$. In this case, the RHS of (\ref{eq:KKT 1}) has a full
rank of $M_{t,s}$ regardless of $\mu$. It then follows that at the
LHS of (\ref{eq:KKT 1}), since the first term has a unit rank,
$\mathtt{Rank}(\mv{\Phi})\geq M_{t,s}-1$. Since $\mv{S}\succeq 0$
and $\mv{\Phi}\succeq 0$, from (\ref{eq:KKT 4}) it follows that
$\mathtt{Rank}(\mv{S})+\mathtt{Rank}(\mv{\Phi})\leq M_{t,s}$. Hence,
$\mathtt{Rank}(\mv{S})\leq 1$, and the proof is completed.

\section{Proof of Lemma \ref{lemma:MISO 2}}\label{appendix:proof of
lemma 2}

Let the optimal beamforming vector $\mv{v}'=\alpha_{v'}\hat{\mv{g}}
+ \beta_{v'}\hat{\mv{b}} $, where
$\hat{\mv{b}}^{\dag}\hat{\mv{g}}=0$. It can be then verified that
replacing $\hat{\mv{b}}$ with $\hat{\mv{h}}_{\bot}$ does not
increase the interference power $||\mv{g}\mv{v}||^2$ in (\ref{eq:Ic
power MISO}) since $(\hat{\mv{h}}_{\bot})^{\dag}\hat{\mv{g}}=0$, but
always helps to improve the SNR at the secondary receiver
$||\mv{h}\mv{v}||^2$ in (\ref{eq:capacity MISO}) since
$||\mv{h}\hat{\mv{h}}_{\bot}||\geq ||\mv{h}\hat{\mv{b}}||$. The
proof is then completed by the fact that the secondary user's
capacity always increases with the receiver SNR.

\section{Proof for Optimality of P-SVD when $\gamma=0$}\label{appendix:optimality of P-SVD}
We show that if in the interference-power constraint (\ref{eq:Ic
power}) of Problem {\bf P2}, $\gamma=0$, the P-SVD is indeed
optimal. First, using the constraint
$\mv{g}\mv{S}\mv{g}^{\dag}=\gamma=0$, (\ref{eq:capacity}) can be
simplified as
\begin{eqnarray}
&& \log_2\left|\mv{I}+\mv{H}\mv{S}\mv{H}^{\dag}\right| \\
&=& \log_2\left|\mv{I}+(\mv{H}_{\bot}+
\mv{H}\hat{\mv{g}}\hat{\mv{g}}^{\dag})\mv{S}(\mv{H}_{\bot}+
\mv{H}\hat{\mv{g}}\hat{\mv{g}}^{\dag})^{\dag}\right|, \label{eq:i} \\
&=&\log_2\left|\mv{I}+\mv{H}_{\bot}\mv{S}(\mv{H}_{\bot})^{\dag}\right|,
\label{eq:ii}
\end{eqnarray}
where (\ref{eq:i}) is from (\ref{eq:channel projection}). Using
(\ref{eq:ii}), Problem {\bf P2} with $\gamma=0$ can be rewritten as
\begin{eqnarray}
\mathtt{Maximize}&&
\log_2\left|\mv{I}+\mv{H}_{\bot}\mv{S}(\mv{H}_{\bot})^{\dag}\right|
\\ \mathtt {Subject \ to} && \mathtt{Tr}(\mv{S})\leq P_t, \\  && \mv{g}\mv{S}\mv{g}^{\dag}=0, \label{eq:Ic power appendix} \\ && \mv{S}\succeq 0.
\end{eqnarray}
Notice that without the constraint (\ref{eq:Ic power appendix}), the
optimal $\mv{S}^*$ for the above problem should have the same form
of $\mv{U}_{\bot}\mv{\Sigma}(\mv{U}_{\bot})^{\dag}$ as by the P-SVD.
At last, it remains to check whether
$\mv{g}\mv{S}^*\mv{g}^{\dag}=0$. This is so because
$\hat{\mv{g}}^{\dag}\mv{U}_{\bot}=0$.

\section{Proof of Lemma \ref{lemma:multiplexing gain}}\label{appendix:proof of lemma 3}

The following equalities/inequalities hold:
\begin{eqnarray}
&& \log_2\left|\mv{I}+\mv{H}\mv{S}\mv{H}^{\dag}\right| \\ &=&
\log_2\left|\mv{I}+(\mv{H}_{\bot}+
\mv{H}\hat{\mv{g}}\hat{\mv{g}}^{\dag})\mv{S}(\mv{H}_{\bot}+
\mv{H}\hat{\mv{g}}\hat{\mv{g}}^{\dag})^{\dag}\right|, \label{eq:1}\\
&\leq&\log_2\left|\mv{I}+2\mv{H}_{\bot}\mv{S}(\mv{H}_{\bot})^{\dag}+2
\mv{H}\hat{\mv{g}}\hat{\mv{g}}^{\dag}\mv{S}(
\mv{H}\hat{\mv{g}}\hat{\mv{g}})^{\dag}\right| \label{eq:2} \\ &\leq&
\log_2\left|\mv{I}+2\mv{H}_{\bot}\mv{S}(\mv{H}_{\bot})^{\dag}\right|+\log_2\left|\mv{I}+2
\mv{H}\hat{\mv{g}}\hat{\mv{g}}^{\dag}\mv{S}(
\mv{H}\hat{\mv{g}}\hat{\mv{g}})^{\dag}\right| \label{eq:3},
\end{eqnarray}
where (\ref{eq:1}) is from (\ref{eq:channel projection}),
(\ref{eq:2}) is by
$(\mv{A}+\mv{B})\mv{S}(\mv{A}+\mv{B})^{\dag}\preceq
2(\mv{A}\mv{S}\mv{A}^{\dag}+\mv{B}\mv{S}\mv{B}^{\dag})$ for
$\mv{S}\succeq 0$, (\ref{eq:3}) is  by
$\log|\mv{I}+\mv{S}_1+\mv{S}_2|\leq \log|\mv{I}+\mv{S}_1| +
\log|\mv{I}+\mv{S}_2|$ for $\mv{S}_1, \mv{S}_2\succeq 0$. Let
$\mathcal{D}_S$ denote the set for $\mv{S}$ specified by (\ref{eq:Tx
power}), (\ref{eq:Ic power}) and (\ref{eq:S}), it then follows that
\begin{eqnarray}
R_{\rm opt} &=& \max_{\mv{S}\in\mathcal{D}_S}
\log_2\left|\mv{I}+\mv{H}\mv{S}\mv{H}^{\dag}\right| \\ &\leq&
\max_{\mv{S}\in\mathcal{D}_S}\log_2\left|\mv{I}+2\mv{H}_{\bot}\mv{S}(\mv{H}_{\bot})^{\dag}\right|+\mathcal{O}(1), \label{eq:4} \\
&\leq&
\max_{\mv{S}\in\mathcal{D}_S}\log_2\left|2\left(\mv{I}+\mv{H}_{\bot}\mv{S}(\mv{H}_{\bot})^{\dag}\right)\right|+\mathcal{O}(1),
\\ &\leq& R_{\rm P-SVD}+M_{r,s}+\mathcal{O}(1) \label{eq:5},
\end{eqnarray}
where (\ref{eq:4}) follows from (\ref{eq:3}) by noticing that due to
(\ref{eq:Ic power}) the second term on the RHS of (\ref{eq:3}) is
upper-bounded  by some finite constant. From (\ref{eq:5}) and
because $R_{\rm opt}\geq R_{\rm P-SVD}$, it follows that
$\lim_{P_t\rightarrow \infty}\frac{R_{\rm
opt}}{\log_2P_t}=\lim_{P_t\rightarrow \infty}\frac{R_{\rm
P-SVD}}{\log_2P_t}$. Since it has already been verified in the main
text that $\lim_{P_t\rightarrow \infty}\frac{R_{\rm
P-SVD}}{\log_2P_t}=\min(M_{r,s},M_{t,s}-1)$, the proof is completed.

\newpage

\begin{figure}
\psfrag{a}{Secondary Transmitter}\psfrag{b}{Secondary Receiver}
\psfrag{c}{Primary Receiver 1}\psfrag{d}{Primary Receiver K}
\psfrag{e}{$\mathtt{Tr}\left(\mv{S}\right)\leq
P_t$}\psfrag{f}{$\mathtt{Tr}\left(\mv{G}_1\mv{S}\mv{G}_1^{\dag}\right)\leq
\Gamma_1$}\psfrag{g}{$\mathtt{Tr}\left(\mv{G}_K\mv{S}\mv{G}_K^{\dag}\right)\leq
\Gamma_K$}\psfrag{h}{$\mv{H}$}\psfrag{i}{$\mv{G}_1$}\psfrag{j}{$\mv{G}_K$}\psfrag{k}{Primary
Receiver
2}\psfrag{l}{$\mv{G}_2$}\psfrag{m}{$\mathtt{Tr}\left(\mv{G}_2\mv{S}\mv{G}_2^{\dag}\right)\leq
\Gamma_2$}
\begin{center}
\scalebox{0.8}{\includegraphics*[76pt,490pt][393pt,722pt]{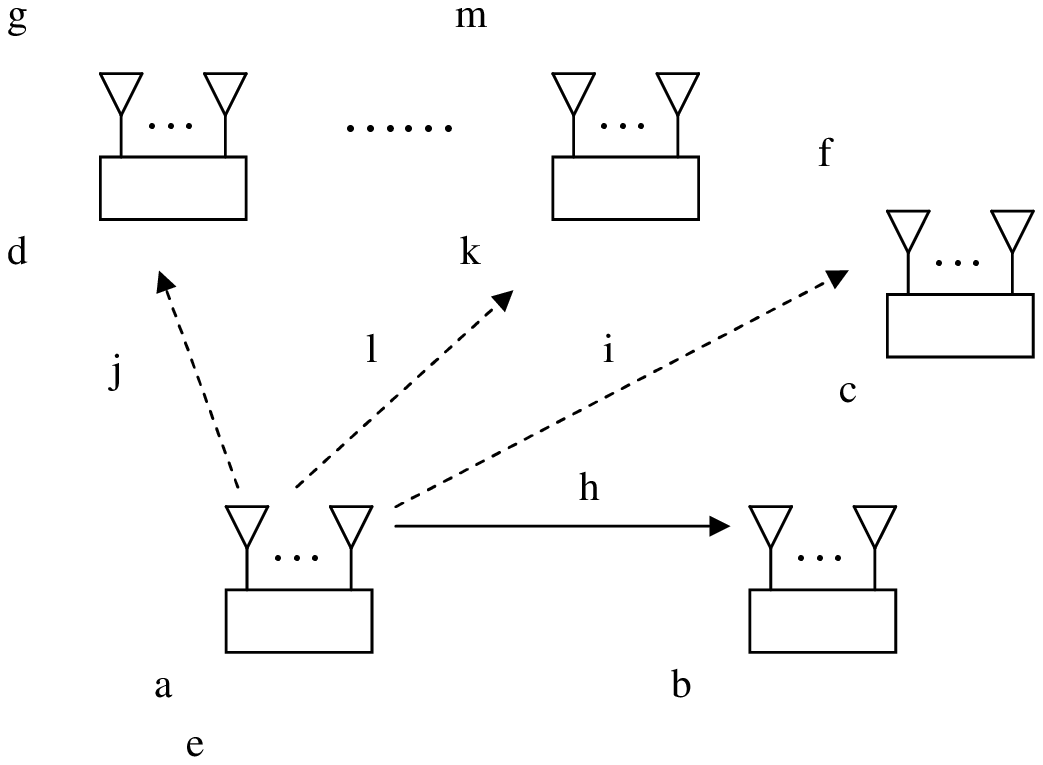}}
\end{center}
\caption{A cognitive radio (CR) network where the secondary user
shares the same transmit spectrum with $K$ primary
users.}\label{fig:CR system}
\end{figure}

\begin{figure}
\centering{
 \epsfxsize=4.0in
    \leavevmode{\epsfbox{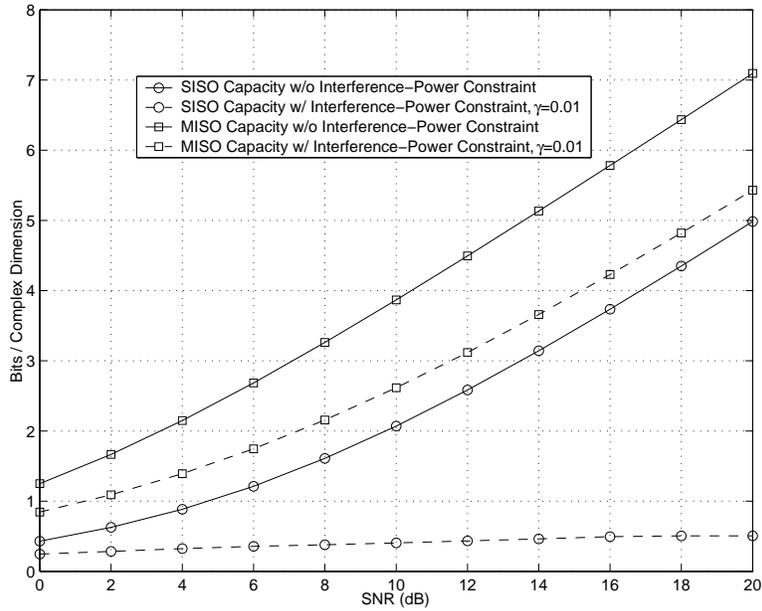}} }
\caption{Capacity comparison for the secondary transmission w/ and
w/o the interference-power constraint at a single-antenna primary
receiver.}\label{fig:capacity comp SU}
\end{figure}

\begin{figure}
\centering{
 \epsfxsize=4.0in
    \leavevmode{\epsfbox{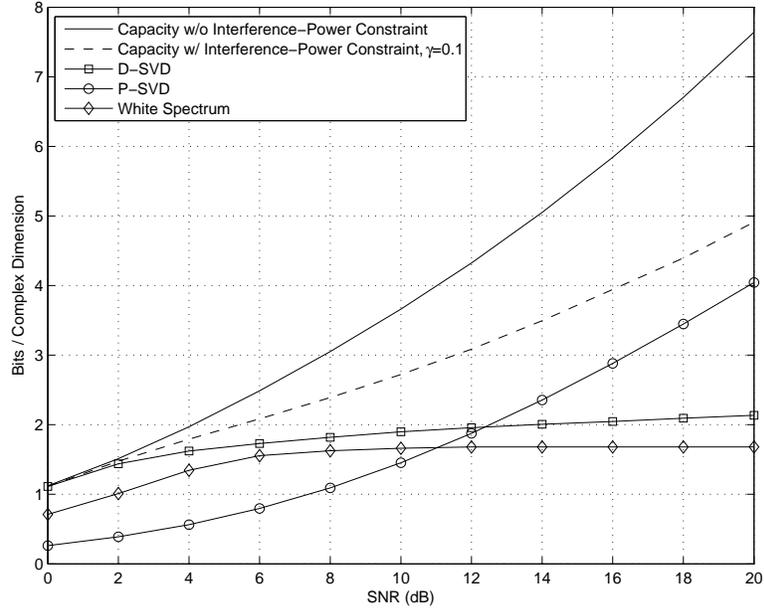}} }
\caption{Comparison of the achievable rates for the secondary
transmission by the D-SVD and P-SVD in the case of a single-antenna
primary receiver, and $M_{t,s}=M_{r,s}=2$.}\label{fig:SVD comp SU}
\end{figure}

\begin{figure}
\centering{
 \epsfxsize=4.0in
    \leavevmode{\epsfbox{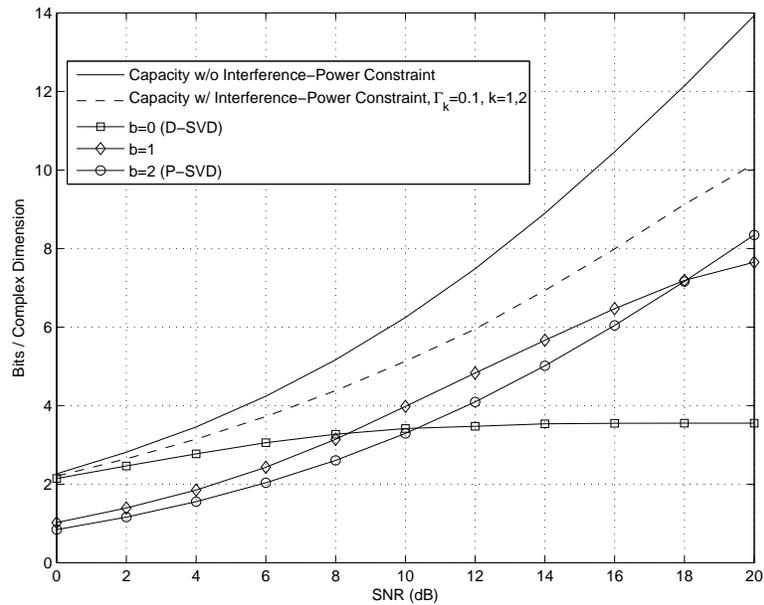}} }
\caption{Comparison of the achievable rates for the secondary
transmission by the hybrid D-SVD/P-SVD with different values of $b$,
in the case of two single-antenna primary receivers, and
$M_{t,s}=M_{r,s}=4$.}\label{fig:SVD comp MU}
\end{figure}

\begin{figure}
\centering{
 \epsfxsize=4.0in
    \leavevmode{\epsfbox{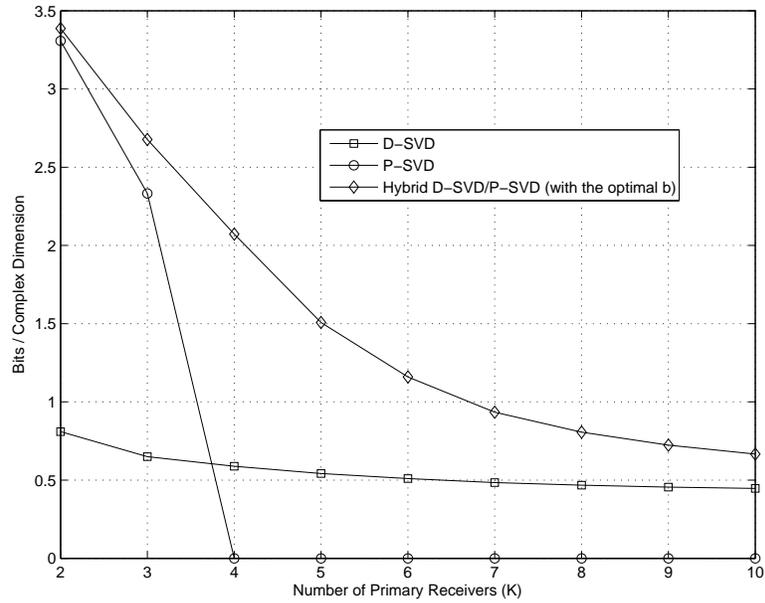}} }
\caption{Comparison of the achievable rates for the secondary
transmission by the hybrid D-SVD/P-SVD, the D-SVD and the P-SVD for
different number of single-antenna primary receivers $K$, in the
case of $M_{t,s}=M_{r,s}=4$ and a constant secondary power
constraint $P_t=10$.}\label{fig:SVD comp K}
\end{figure}

\begin{figure}
\centering{
 \epsfxsize=4.0in
    \leavevmode{\epsfbox{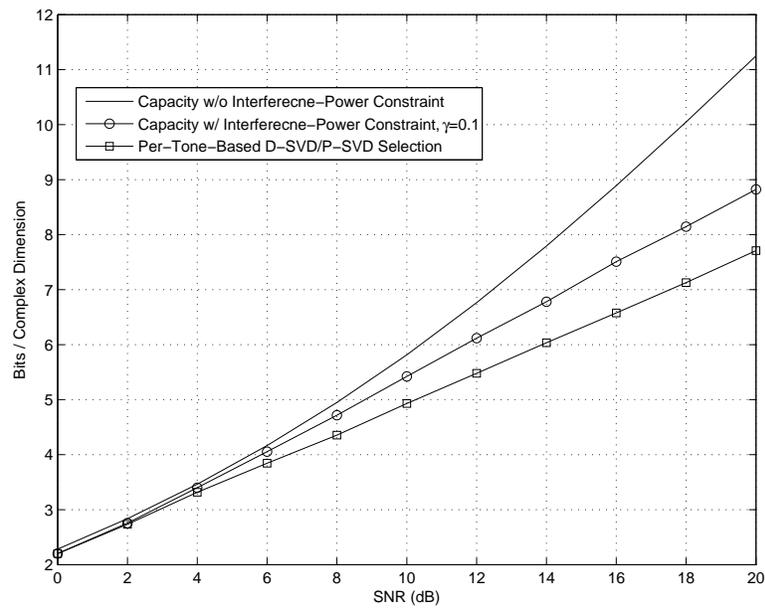}} }
\caption{Comparison of the achievable rates for the secondary
transmission in a multi-tone-based broadband system with $N=64$, a
single-antenna primary receiver, and
$M_{t,s}=M_{r,s}=2$.}\label{fig:Multitone}
\end{figure}

\end{document}